\numberwithin{equation}{section}
\newtheorem{theorem}{Theorem}[section]
\newtheorem{corollary}{Corollary}[section]
\renewcommand{\d}{\mathrm{d}}
\newcommand\vecwt{\tilde{\boldsymbol{w}}}
\newcommand{\veceh}{\hat{\boldsymbol{e}}}
\newcommand{\mt}{\tilde{m}}
\newcommand{\lambdat}{\tilde{\lambda}}
\newcommand{\figref}[1]{Figure~\ref{#1}}
\newcommand\beps{\mbox{\boldmath${\epsilon}$}}
\renewcommand{\d}{\mathrm{d}}
\newcommand{\Kc}{\mathcal{K}}
\newcommand{\Rc}{\mathcal{R}}
\newcommand{\bDelta}{\mbox{\boldmath${\Delta}$}}
\newcommand{\bcdot}{\mbox{\boldmath${\cdot}$}}
\newcommand{\btimes}{\mbox{\boldmath${\times}$}}
\newcommand{\bnabla}{\mbox{\boldmath${\nabla}$}}
\newcommand{\brho}{\mbox{\boldmath${\rho}$}}
\newcommand{\bsig}{\mbox{\boldmath${\sigma}$}}
\newcommand{\vecw}{\boldsymbol{w}}
\renewcommand{\d}{\mathrm{d}}
\newcommand{\bepsilon}{\mbox{\boldmath${\epsilon}$}}
\newcommand{\bsigma}{\mbox{\boldmath${\sigma}$}}
\renewcommand{\d}{\mathrm{d}}
\newcommand{\vecx}{\boldsymbol{x}}
\newcommand{\vece}{\boldsymbol{e}}
\newcommand{\vecE}{\boldsymbol{E}}
\newcommand{\vecD}{\boldsymbol{D}}
\newcommand{\vecJ}{\boldsymbol{J}}
\begin{document} 
\title[Spectral theory for discrete uniaxial polycrystalline materials]{Spectral theory of effective transport for discrete\\uniaxial polycrystalline materials}
	\author{N. Benjamin Murphy, Daniel Hallman, Elena Cherkaev, and Kenneth M. Golden}
	\address{Department of Mathematics, University of Utah,
		155 S 1400 E, Salt Lake City, UT 84112-0090}
	\maketitle

\begin{abstract}
We previously demonstrated that the bulk transport coefficients of uniaxial polycrystalline materials, including electrical and thermal conductivity, diffusivity, complex permittivity, and magnetic permeability,  
have Stieltjes integral representations involving spectral measures of self-adjoint random operators. The integral representations follow from resolvent representations of physical fields involving these self-adjoint operators, such as the electric field $\boldsymbol{E}$ and current density $\boldsymbol{J}$ associated with conductive media
with local conductivity $\boldsymbol{\sigma}$ and resistivity $\boldsymbol{\rho}$ matrices. In this article, we provide a discrete 
matrix analysis of this mathematical framework which parallels the continuum theory. We show that
discretizations of the operators yield real-symmetric random matrices which are composed of
projection matrices. We derive discrete resolvent representations for $\boldsymbol{E}$ and $\boldsymbol{J}$ 
involving the matrices which lead to eigenvector expansions of $\boldsymbol{E}$ and $\boldsymbol{J}$. We 
derive discrete Stieltjes integral representations for the components of the effective conductivity and resistivity 
matrices, $\boldsymbol{\sigma}^*$ and $\boldsymbol{\rho}^*$, involving spectral measures for the real-symmetric random matrices, which are given explicitly 
in terms of their real eigenvalues and orthonormal eigenvectors. We provide a projection 
method that uses properties of the projection matrices to show that the spectral measure can be computed by much 
smaller matrices, which leads to a more efficient and stable numerical algorithm for the computation of bulk transport 
coefficients and physical fields. We demonstrate this algorithm by numerically computing the spectral measure and current density for model 2D and 3D isotropic polycrystalline media with checkerboard microgeometry. 
\end{abstract}

\section{Introduction}\label{sec:Introduction}
In \cite{Murphy:2024:arXiv:PolycrystalTheoryContinuum} we formulated a mathematical framework 
to provide Stieltjes integral representations for the bulk transport coefficients for uniaxial polycrystalline materials 
\cite{Barabash:JPCM:10323,Gully:PRSA:471:2174,Milton:2002:TC}, including electrical and thermal conductivity, diffusivity, complex permittivity, and magnetic permeability, involving spectral measures of self-adjoint random operators. All of these transport phenomena are described locally by the same elliptic partial differential equation (PDE)~\cite{Milton:2002:TC}. For example, static electrical conduction \cite{Jackson-1999} is described by $\bnabla \bcdot \, (\bsigma \bnabla \phi)=0$ with electrical potential 
$\phi$, electric field $\vecE=-\bnabla\phi$, and local conductivity matrix $\bsigma$ given by \cite{Milton:2002:TC}
$\bsigma=R^T\text{diag}(\sigma_1,\sigma_2,\ldots,\sigma_2)R$ where $\sigma_1$ and $\sigma_2$ are real-valued, i.e., $\bsigma=R^T\text{diag}(\sigma_1,\sigma_2,\sigma_2)R$ for 3D and  $\bsigma=R^T\text{diag}(\sigma_1,\sigma_2)R$ for 2D. This is the general setting for 2D but introduces a uniaxial asymmetry for $d\ge3$,  i.e., the local conductivity along one of the crystal axes has the value $\sigma_1$, while the conductivity along all the other crystal axes
has the value $\sigma_2$. Here, $R$ is a rotation matrix that determines the arrangement and orientations of the crystallites comprising the polycrystalline material.

Consequently, the current and electric fields $\vecJ=\bsigma\vecE$ and $\vecE$ satisfy the electrostatic version of Maxwell's equations $\bnabla\bcdot\vecJ=\rho$ and $\bnabla\btimes\vecE=0$. 
In the long electromagnetic wavelength limit, the electric and displacement fields satisfy the quasistatic limit of Maxwell's equations \cite{Milton:2002:TC} $\bnabla\bcdot\vecD=\rho_b$ and $\bnabla\btimes\vecE=0$, where $\rho_b$ is the bound charge density \cite{Jackson-1999}, the displacement field is given by $\vecD=\bepsilon\vecE$, $\bepsilon=R^T\text{diag}(\epsilon_1,\epsilon_2,\ldots,\epsilon_2)R$, and the crystal permittivities $\epsilon_1(f)$ and $\epsilon_2(f)$ take \emph{complex values} which depend on the electromagnetic wave frequency $f$. In the remainder of this manuscript we formulate the problem of effective transport in terms of electrical conductivity, keeping in mind the broad applicability of the method.

Polycrystalline materials are solids that are composed of many 
crystallites of varying size, shape, and orientation. The crystallites 
are microscopic crystals which are held together by boundaries which can
be highly defective. 
Due to the highly irregular shapes of the crystallites and their
defective boundaries, on a microscopic level, the electromagnetic
transport properties of the medium can be quite erratic. As a
consequence, 
the derivatives in the transport equations may not hold in a classical sense and a weak formulation 
of the transport equation is necessary to provide a rigorous mathematical description of effective transport
for such materials \cite{Papanicolaou:RF-835,Golden:CMP-473}.

In \cite{Murphy:2024:arXiv:PolycrystalTheoryContinuum} we utilized the mathematical framework developed in \cite{Papanicolaou:RF-835,Golden:CMP-473} to show that the spectral measures underlying the Stieltjes integral 
representations for the bulk transport coefficients are associated with self-adjoint compositions of random and non-random projection operators. The random projection operators follow from writing  $\bsigma=R^T\text{diag}(\sigma_1,\sigma_2,\ldots,\sigma_2)R$ 
as $\bsigma=\sigma_1X_1+\sigma_2X_2$,  
where $X_1$ and $X_2$ are mutually orthogonal projection matrices that contain geometric information about the polycrystalline material via the rotation matrix $R$. A non-random projection 
operator arising in Stieltjes integrals for the components $\sigma^*_{jk}$, $i,j=1,\ldots,d$, of the effective electrical conductivity matrix $\bsigma^*$ is given by $\Gamma=\bnabla(\bnabla^*\bnabla)^{-1}\bnabla^*$, which is a projection onto the range of a generalized gradient operator $\bnabla$, where $\bnabla^*$ is the adjoint of $\bnabla$ and $d$ is the system dimension. The Stieltjes integrals involve spectral measures for the self-adjoint random operators $X_i\Gamma X_i$, $i=1,2$. The Stieltjes integrals for the components $\rho^*_{jk}$, 
$i,j=1,\ldots,d$, of the effective electrical resistivity matrix $\brho^*$ involve spectral measures for the 
self-adjoint random operators $X_i\Upsilon X_i$, $i=1,2$, where
$\Upsilon=\bnabla\btimes(\bnabla\btimes\bnabla\btimes)^{-1}\bnabla\btimes$ is a projection
onto the range of a generalized curl operator $\bnabla\btimes$.

Here develop a discrete mathematical framework that closely resembles the 
mathematical framework for the continuum setting 
\cite{Murphy:2024:arXiv:PolycrystalTheoryContinuum}, which
provides a rigorous way to numerically compute spectral measures,
effective parameters, and the physical fields $\vecE$ and $\vecJ$ 
for uniaxial polycrystalline materials. In particular, discretization of a continuous medium 
leads to a matrix analysis description of effective transport in composites which 
closely parallels the associated partial 
differential equation description for the continuum setting. Remarkably, the 
discrete formulation follows directly from just the fundamental theorem of 
linear algebra and known \cite{Huang:2019:AMSA.4.1} orthogonality properties of the domains, ranges, and 
kernels of discrete, finite difference representations of the curl, gradient,
and divergence operators.

In this discrete setting, the random operators $X_i\Gamma X_i$ and $X_i\Upsilon X_i$, $i=1,2$,
underlying the integral representations for the effective parameters
are represented by real-symmetric random matrices. The physical fields $\vecE$ and $\vecJ$ 
and the spectral measures are determined explicitly by their real eigenvalues and 
orthonormal eigenvectors. We provide a numerically efficient projection method to facilitate such
numerical computations, and use the method to compute the physical fields $\vecE$ and $\vecJ$,
spectral measures, and effective parameters for a model of isotropic polycrystalline media
with checkerboard microgeometry. In 2D, computed spectral
measures are in excellent agreement with known theoretical results, and the
computed values of the effective parameters fall within the rigorous bounds, as shown in 
\cite{Murphy:2024:arXiv:PolycrystalTheoryContinuum}.
%

\section{Stieltjes integrals for bulk transport coefficients for continuous media}
\label{sec:continuous_setting}
In this section, we briefly describe the \emph{analytic continuation method} 
for studying the effective transport properties of composite materials
\cite{Bergman:PRL-1285:44,Milton:APL-300,Golden:CMP-473,Golden:IMA-97}.
In particular, we review our application of the method to the setting of uniaxial 
polycrystalline materials given in \cite{Murphy:2024:arXiv:PolycrystalTheoryContinuum}.
This method has been used to obtain rigorous bounds on bulk
transport coefficients of polycrystalline media from partial knowledge
of the microstructure, such as the average orientation angle $\theta$ of each crystallite
\cite{Gully:PRSA:471:2174,Murphy:2024:arXiv:PolycrystalTheoryContinuum}. Examples of transport coefficients
to which this approach applies include
the complex permittivity, electrical and thermal conductivity,
diffusivity, and magnetic permeability.
To set ideas we focus on electrical conductivity, keeping the broad applicability of the method in mind. 
Consider a random medium with
local conductivity matrix $\bsig(x,\omega)$,
a spatially stationary random field
in $x\in\mathbb{R}^d$ and $\omega\in\Omega$, where $\Omega$ is the
set of realizations of the medium. 

The local conductivity matrix $\bsigma$ for uniaxial polycrystalline materials is given by \cite{Barabash:JPCM:10323,Gully:PRSA:471:2174,Milton:2002:TC,Murphy:2024:arXiv:PolycrystalTheoryContinuum}
$\bsigma=R^T\text{diag}(\sigma_1,\sigma_2,\ldots,\sigma_2)R$, i.e., $\bsigma=R^T\text{diag}(\sigma_1,\sigma_2,\sigma_2)R$ for 3D and  $\bsigma=R^T\text{diag}(\sigma_1,\sigma_2)R$ for 2D, where $R(\vecx,\omega)$ is a random rotation matrix determining the orientation of the crystallite with $\vecx$ in the interior for $\omega\in\Omega$ --- this is the general setting for 2D but introduces a uniaxial asymmetry for $d\ge3$. The orientation of each crystallite comprising the polycrystalline composite material is determined by a single angle $\theta(\vecx,\omega)$ and single rotation matrix for 2D, and $d$ angles $\theta_i(\vecx,\omega)$, $i=1,\ldots,d$ for dimension $d\ge3$, where $R$ is a composition of simple rotation 
matrices \cite{Murphy:2024:arXiv:PolycrystalTheoryContinuum}. Using the projection matrix $C= \text{diag}(\vece_1)$, where $\vece_k$ is the 
$k$th standard basis vector, the local conductivity tensor can be written as
\begin{align}\label{eq:sigma_two-phase}
	\bsig(x,\omega) = \sigma_1 \,X_1(x,\omega) 
	+ \sigma_2 \,X_2(x,\omega)\,.
\end{align}
where $X_1 = R^TCR$, $X_2 = R^T(I-C)R$ and $I$ is the identity matrix.
The random electric and current fields $\vecE(x,\omega)$ and $\vecJ(x,\omega)$ satisfy \cite{Jackson-1999}
\begin{align}\label{eq:Maxwells_Equations}
	\bnabla\times\vecE=0\,, 
	\quad
	\bnabla\cdot\vecJ = 0\,, 
	\quad
	\vecJ=\bsig\vecE\,.
\end{align}

We define \cite{Murphy:2024:arXiv:PolycrystalTheoryContinuum} the effective conductivity matrix $\bsig^*$ by 
\cite{Papanicolaou:RF-835}
$\langle \vecJ \rangle=  \bsig^* \langle \vecE \rangle$,
where $\langle\cdot\rangle$ is ensemble averaging over $\Omega$ 
or spatial average over all of $\mathbb{R}^d$ \cite{Golden:CMP-473}.
We define our coordinate system so that $\langle\vecE\rangle=\vecE_0=E_0\vece_k$.
A variational calculation 
\cite{Golden:CMP-473,Murphy:2024:arXiv:PolycrystalTheoryContinuum} 
establishes the energy constraint 
\begin{align}
\label{eq:E_variational}
\langle \vecJ\cdot\vecE_f\rangle=0
\end{align}
(Helmholtz's theorem) involving the fluctuating field 
$\vecE_f=\vecE-\vecE_0$. It then follows from 
$\langle \vecJ\cdot \vecE \rangle=\langle \vecJ\rangle\cdot \vecE_0$ that
$\bsigma^*$ is also given in terms of the energy functional 

\begin{align}
\label{eq:sigma*}
\langle \vecJ\cdot \vecE \rangle =  
\bsig^*\vecE_0\bcdot\vecE_0\,,
	\quad
	\vecE_0=E_0\vece_k\,.
\end{align}
Consequently, the method defines a homogeneous medium with
constant conductivity matrix $\bsig^*$ subject to a uniform applied electric 
field $\vecE_0=E_0\vece_k$ with constant electric field strength $E_0$ that behaves 
both macroscopically and energetically 
\emph{exactly the same} as the inhomogeneous polycrystalline material in the 
infinite volume limit \cite{Papanicolaou:RF-835}. 

The key step 
of the method 
is to obtain the following Stieltjes integral representation for $\bsig^*$ \cite{Bergman:PRC-377:9,Milton:APL-300,Golden:CMP-473,Milton:2002:TC},  
\begin{equation} \label{eq:F(s)}
	F_{jk}(s)=1 - \frac{\sigma^*_{jk}}{\sigma_2} =\int_0^1 {\frac{d\mu_{jk}(\lambda)}{s-\lambda}}\,,
	\quad s = \frac{1}{1-\sigma_1/\sigma_2}\,,
\end{equation}
where the $\mu_{jk}$, $j,k=1,\ldots,d$ are Stieltjes measure on $[0,1]$, the $\mu_{kk}$ are
\emph{positive} measures and the $\mu_{jk}$ for $j\ne k$ are \emph{signed} measures. 
In the variable $h=\sigma_1/\sigma_2$, 
$F(s)=\langle X_1\vecE\bcdot X_1\vece_k\rangle/(sE_0)$ is a \emph{Stieltjes function} 
\cite{Golden:PRL-3935,Murphy:JMP:063506}, playing the role of effective
electric susceptibility, $\boldsymbol{\chi}^*(s)=\bsig^*/\sigma_2-1=-F(s)$.
Equation \eqref{eq:F(s)} arises from a resolvent formula for the components of the electric field,
\begin{align}\label{eq:Electric_Field}
	X_1\vecE=s(sI-X_1 \Gamma X_1)^{-1}X_1\vecE_0,
	%
\end{align}
yielding 
$F_{jk}(s)=\langle[(sI-X_1 \Gamma X_1)^{-1}X_1\vece_j]\cdot X_1\vece_k\rangle$,
where $\Gamma  = -\bnabla (- \Delta )^{-1} \bnabla \cdot$ is a projection onto the range of the gradient operator $\bnabla$, where $\Delta=\bnabla\bcdot\bnabla=\nabla^2$ is the Laplacian operator. Formula \eqref{eq:F(s)}
is the spectral representation of the resolvent
and $\mu$ is the spectral measure of 
the self-adjoint operator
$X_1\Gamma X_1$
on $L^2(\Omega,P)$.
(Stieltjes integral representations for the $\sigma^*_{jk}$ can also be formulated in terms of the 
self-adjoint operator $X_2\Gamma X_2$ and the contrast variable $t=1/(1-\sigma_2/\sigma_1)=1-s$ \cite{Murphy:2024:arXiv:PolycrystalTheoryContinuum}.)

A key feature of equation \eqref{eq:F(s)} is that the component parameters, $\sigma_1$ and $\sigma_2$, 
in $s$ are separated from the geometrical information in the spectral measure $\mu_{jk}$. Information about the geometry enters
through the moments 
\begin{align}\label{eq:Moments_Fs}
	\mu_{jk}^n = 
	\int^1_0 \lambda^n d \mu_{jk}(\lambda)  =
	\langle [X_1\Gamma X_1]^n X_1 \vece_j \cdot X_1\vece_k \rangle.
\end{align}
The measure mass is given by $\mu_{jk}^0 = \langle X_1 \vece_j \cdot \vece_k \rangle$, which can be thought of as the ``mean orientation", or as a percentage of crystallites oriented in the $k^{\text{th}}$ direction. In general, higher order moments $\mu_{jk}^n$ depend on the $(n+1)$--point correlation function of the
medium \cite{Golden:CMP-473}.

The local resistivity matrix is the matrix inverse of the conductivity matrix, $\brho=\bsigma^{-1}$. Stieltjes integral representations for the components $\rho^*_{jk}$ of the effective resistivity matrix $\brho^*$ can be given in terms of the self-adjoint operators $X_1\Upsilon X_1$ and $X_2\Upsilon X_2$, where $\Upsilon=\bnabla\btimes(\bnabla\btimes\bnabla\btimes)^{-1}\bnabla\btimes$ is a 
projection onto the range of the curl operator $\bnabla\btimes$ \cite{Murphy:2024:arXiv:PolycrystalTheoryContinuum}. Here, the effective
resistivity matrix is defined by $\langle \vecE \rangle=  \brho^* \langle \vecJ \rangle$ and we define our coordinate 
system so that $\langle\vecJ\rangle=\vecJ_0=\vecJ_0\vece_k$. The energy constraint is given by
\begin{align}
	\label{eq:J_variational}
	\langle \vecJ_f\cdot\vecE\rangle=0
\end{align}
which leads to the energy functional representation for $\brho^*$
\begin{align}
\label{eq:rho*}
\langle \vecJ\cdot \vecE \rangle =  
\brho^*\vecJ_0\bcdot\vecJ_0\,,
\qquad
\vecJ=\vecJ_f+\vecJ_0,
\quad
\langle \vecJ \rangle=\vecJ_0=\vecJ_0\vece_k\,.
\end{align}

The resolvent formulas for the current density
are analogous to equation \eqref{eq:Electric_Field} \cite{Murphy:2024:arXiv:PolycrystalTheoryContinuum},
\begin{align}\label{eq:Current_Field}
X_1\vecJ=t(tI-X_1 \Upsilon X_1)^{-1}X_1\vecJ_0,
\quad
X_2\vecJ=s(sI-X_2 \Upsilon X_2)^{-1}X_2\vecJ_0.
\end{align}
Analogous to equation \eqref{eq:F(s)}, Stieltjes integral representations for the $\rho^*_{jk}$ are given 
by \cite{Murphy:2024:arXiv:PolycrystalTheoryContinuum}
\begin{equation} \label{eq:H(t)}
H_{jk}(t)=
1-\sigma_2\rho^*_{jk}=
\langle(tI-X_1\Upsilon X_1)^{-1}X_1\vece _j\bcdot\vece _k\rangle=
\int_0^1\frac{\d\kappa_{jk}(\lambda)}{t-\lambda}\,,
\quad 
t = \frac{1}{1-\sigma_1/\sigma_2}\,.
\end{equation}

These integral representations yields rigorous \emph{forward bounds} for the 
effective parameters of composites, given partial information on the 
microgeometry via the $\mu^n$ 
\cite{Bergman:PRL-1285:44,Milton:APL-300,Golden:CMP-473,Bergman:AP-78}.
One can also use the integral representations to obtain \emph{inverse
	bounds}, allowing one to use data about the electromagnetic response
of a sample, for example, to bound its structural parameters, such as
the mean orientation angle of the crystallites 
\cite{McPhedran:APA:19,McPhedran:MRSP:1990:195,Cherkaev:WRM-437,Cherkaev:IP-1203,
	Zhang:JCP-5390,Bonifasi-Lista:PMB-3063,Cherkaev:JBiomech-345,Day:JPCM-99,Golden:JBM:337}.

\section{Stieltjes integrals for bulk transport coefficients for discrete media}
\label{sec:discrete_setting}
In this section we adapt the discrete matrix analysis framework developed in \cite{Murphy:2015:CMS:13:4:825} that 
describes effective transport in discrete two-component composite materials, such as a random resistor network, to 
formulate a mathematical framework that provides discrete, matrix analysis versions of the results reviewed in 
Section \ref{sec:continuous_setting}. We show that a discretization of the operators
$X_1\Gamma X_1$ and $X_1\Upsilon X_1$, for example, lead to real-symmetric \emph{random matrices}. 
Here, $\Gamma$ and $\Upsilon$ are (non-random) projection matrices which depend only on the lattice 
topology and boundary conditions, and $X_1$ is a (random) projection matrix 
which determines the crystallite arrangement and orientation angles of the polycrystalline material.

In Section \ref{sec:field_decomp}, we prove the existence of transport fields $\vecE$ and $\vecJ$ satisfying 
discrete forms of the following systems of equations
\begin{align}   \label{eq:Maxwells_Equations_E}  
&\bnabla \times\vecE =0, \quad
\bnabla \bcdot\vecJ=0,\quad
\vecJ=\bsigma\vecE ,\quad
\vecE =\vecE _0+\vecE _f, \quad
\langle\vecE \,\rangle=\vecE _0\,, 
 \\
 \label{eq:Maxwells_Equations_J}
    &\bnabla \times\vecE =0, \quad
    \bnabla \bcdot\vecJ=0, \quad
    \vecE =\brho\vecJ,\quad
    \vecJ=\vecJ_0+\vecJ_f,\quad
    \langle\vecJ\,\rangle=\vecJ_0\,.
\end{align}
In Section \ref{sec:resolvent_reps}, we solve these systems of equations by deriving resolvent representations for $X_i\vecE$ and $X_i\vecJ$, $i=1,2$, in \eqref{eq:Electric_Field} and \eqref{eq:Current_Field}, with 
$\vecE=X_1\vecE+X_2\vecE$ and $\vecJ=X_1\vecJ+X_2\vecJ$, in terms of the real-symmetric matrices
$X_i\Gamma X_i$ and $X_i\Upsilon X_i$, $i=1,2$. We show that these resolvent representations have summation formulas given explicitly in terms of the real eigenvalues and orthonormal eigenvectors of the real-symmetric matrices $X_1\Gamma X_1$ and $X_1\Upsilon X_1$
\begin{align}\label{eq:discrete_resolvent}
X_1 \vecE = sE_0\sum_j \frac{(\vecw_i \cdot X_1 \hat \vece_k)}{s-\lambda_i} \vecw_i\,, 
\quad
X_1 \vecJ =t\vecJ_0\sum_i \frac{(\vecwt_i \cdot X_1 \hat \vece_k)}{t-\lambdat_i} \vecwt_j\,,
\end{align}
where the $(\lambda_i,\vecw_i)$ and $(\lambdat_i,\vecwt_i)$ are the eigenvalues and eigenvectors of the
matrices $X_1\Gamma X_1$ and $X_1\Upsilon X_1$, respectively.

In Section \ref{sec:Stieltjes_integrals}, we show that the integrals in equations \eqref{eq:F(s)} 
and \eqref{eq:H(t)}, for example, also have summation formulas \cite{Murphy:2015:CMS:13:4:825},
\begin{align}
\label{eq:Discrete_E_F(s)}
&F_{kk}(s)=\sum_i\left\langle
\frac{m_i}{s-\lambda_i}\right\rangle\,,
\quad
m_i = |\vecw_i\,\cdot\,X_1 \hat{\vece}_k|^2,
\qquad
&H_{kk}(t)=\sum_i\left\langle
\frac{\mt_i}{t-\lambdat_i}\right\rangle\,,
\quad
\mt_i = |\vecwt_i\,\cdot\,X_1 \hat{\vece}_k|^2,
\end{align}
where $\hat{\vece}_k$ plays the role of a standard basis vector on the
lattice. Then the discrete spectral measure $\mu_{kk}$, for example, is given 
in terms of the eigenvalues $\lambda_i$ and 
orthonormal eigenvectors $\vecw_i$ of the matrix $X_1\Gamma X_1$
\cite{Murphy:2015:CMS:13:4:825},   
\begin{align}\label{eq:Stieltjes_F_Discrete}
	\mu(d\lambda)=\langle Q(d\lambda)\hat{\vece}_k\cdot\hat{\vece}_k\rangle,
	\quad 
	Q(d\lambda)=\sum_i\delta_{\lambda_i}(d\lambda)X_1Q_i\,,
	\quad
	Q_i=\vecw_i\vecw_i^T\,.
\end{align}
Here,  $Q(d\lambda)$ is the 
projection valued measure associated
with the matrix $X_1\Gamma X_1$, $\delta_{\lambda_i}(d\lambda)$ is the delta measure centered at
$\lambda_i$, and the matrix $Q_i=\vecw_i\vecw_i^T$ is a projection onto the
eigenspace spanned by $\vecw_i$ \cite{Murphy:2015:CMS:13:4:825}.

\subsection{Existence of transport fields for discrete media}
\label{sec:field_decomp}
In this section, we prove that there exists transport fields $\vecE$ and $\vecJ$ that satisfy discrete versions 
of equations \eqref{eq:Maxwells_Equations_E} and \eqref{eq:Maxwells_Equations_J}. Specifically, 
in Theorem \ref{thm:existence_E}, we prove the existence of an electric field $\vecE$ 
satisfying a discrete version of the system of equations in \eqref{eq:Maxwells_Equations_E}. 
In Corollary \ref{cor:existence_J}, we prove that there exists a current field $\vecJ$ 
satisfying a discrete version of the system of equations in \eqref{eq:Maxwells_Equations_J}.

Towards this goal,
we follow \cite{Huang:2019:AMSA.4.1} and consider finite difference 
representations of the partial differential operators $\partial_i\to C_i$,
$i=1\ldots,d$, where $d$ denotes dimension. The matrices $C_i$ depend on boundary
conditions which, without loss of generality, we take to be periodic boundary 
conditions. Denote the matrix representation 
of the gradient operator $\bnabla$ (using Matlab vertical block notation) 
by $\nabla=[C_1;\ldots;C_d]$. The discretization of the divergence operator is 
given by $-\nabla^T$ and the discrete curl operator $\bnabla\btimes$ is given by
\cite{Huang:2019:AMSA.4.1}
\begin{align}\label{eq:block_C}
C&=\left[
\begin{array}{rrr}
	O&-C_3&C_2\\
	C_3&O&-C_1\\  
	-C_2&C_1&O
\end{array}
\right]
\;\text{ for 3D,} 
\\
C&=[-C_2,C_1]
\;\text{ for 2D.} 
\notag
\end{align}
The operators $C_i$, $i=1,2,3$, in \eqref{eq:block_C} are normal 
and commute with each other \cite{Huang:2019:AMSA.4.1},
\begin{align}
	C_i^T C_j=C_jC_i^T \text{ and } C_iC_j=C_jC_i, \text{ for } i,j=1,2,3.
\end{align}

We now summarize some useful identities relating the discrete representations
of the gradient, divergence, and curl operators which follow from these properties 
of the matrices $C_i$ \cite{Huang:2019:AMSA.4.1},
\begin{align}\label{identities}
	&\Delta=\bnabla\bcdot\bnabla\to-\nabla^T\nabla\,,
	\\
	&\bDelta = \text{diag}(\Delta,\ldots,\Delta)
	\to I_d\otimes(\nabla^T\nabla)\,,
	\notag\\
	&\bnabla\times\bnabla\times\to C^TC\,,
	\notag\\
	&\bnabla\times\bnabla\times=\bnabla(\bnabla\bcdot)-\bDelta
	\to -\nabla\nabla^T + I_d\otimes(\nabla^T\nabla)\,,
	\notag\\
	&\bnabla\bcdot(\bnabla\times)\to -\nabla^TC^T=0\,,
	\notag\\
	&\bnabla\times\bnabla\to C\nabla=0\,,
	\notag
\end{align}
where $\otimes$ denotes the Kronecker product.
The last two identities $\nabla^TC^T=0$ 
and $C\nabla=0$ in equation \eqref{identities} indicate that 
%
\begin{align}\label{subspaces}
	\Rc(C^T)\subseteq\Kc(\nabla^T)\,,
	\quad
	\Rc(\nabla)\subseteq\Kc(C)
\end{align}

Consequently, the discrete form of the system of equations in \eqref{eq:Maxwells_Equations_E} 
and the energy constraint in \eqref{eq:E_variational} is
\begin{align}\label{eq:Maxwells_Equations_discrete_decomp}
	C\vecE=0, 
	\quad -\nabla^T \vecJ=0, 
	\quad \vecJ=\beps \vecE,
	\quad
	\vecE=\vecE_f+\vecE_0\,,
	\quad
	C\vecE_f=0\,,    
	\quad
	\langle\vecJ\bcdot\vecE_f\rangle=0\,,
	\quad
	\langle\vecE_f\rangle=0\,,
\end{align}
where in this finite size discrete setting, 
$\langle\cdot\rangle$ denotes volume average followed by ensemble average
\cite{Murphy:2015:CMS:13:4:825,Murphy:JMP:063506}.
To set notation, denote by $\Rc(B)$ and $\Kc(B)$ 
the range and kernel (null space) of a matrix $B$, respectively 
\cite{Horn_Johnson-1990}.  
Therefore, we seek to prove that there exists a vector $\vecE$ 
satisfying $\vecE\in\Kc(C)$ such that $\vecE=\vecE_f+\vecE_0$, where 
$\vecE_f\in\Kc(C)$ and $\langle \vecE_f\rangle=0$ with $\vecE_0\in\Kc(C)$, so 
$\langle\vecE\rangle=\vecE_0$. 
Moreover, we seek to find a vector 
$\vecJ\in\Kc(\nabla^T)$ satisfying $\vecJ=\beps\vecE$ and 
$\langle\vecJ\bcdot\vecE_f\rangle=0$ and $\vecJ_0\in\Kc(\nabla^T)$. 
Given the formulas in equation \eqref{subspaces}, we focus on finding 
``potentials'' $\varphi$ and $\psi$ such that $\vecE_f=\nabla\varphi$  
and $\vecJ_f=C^T\psi$.

The last two identities \eqref{identities} provide a relationship between 
rank and kernel of the operators $C$, $\nabla$, and their transposes. 
The fundamental theorem of linear algebra \cite{Horn_Johnson-1990} provides a relationship 
between the range of a matrix $A$ and the kernel of it's transpose $A^T$, 
which will be useful later in this section.
\begin{theorem}[Fundamental theorem of linear algebra]
	\label{FTLA}
	Let $A$ be a real valued matrix of size $m\times n$ then
	\begin{align} \label{eq:FTLA}
		\mathbb{R}^m=\Rc(A)\oplus\Kc(A^T)\,,
		\qquad
		\mathbb{R}^n=\Rc(A^T)\oplus\Kc(A)\,,
	\end{align}
	where $\oplus$ indicates $\Rc(A)$ is orthogonal 
	to $\Kc{(A^T)}$, i.e., $\Rc(A)\perp\Kc{(A^T)}$, 
	for example.
\end{theorem}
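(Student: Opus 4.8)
The plan is to prove the first decomposition $\mathbb{R}^m=\Rc(A)\oplus\Kc(A^T)$ from scratch and then obtain the second, $\mathbb{R}^n=\Rc(A^T)\oplus\Kc(A)$, for free by applying the same statement to the matrix $A^T$ in place of $A$ and using $(A^T)^T=A$. The two facts I want to assemble are: (i) $\Rc(A)$ and $\Kc(A^T)$ are orthogonal subspaces, and (ii) $\Kc(A^T)$ is \emph{all} of $\Rc(A)^{\perp}$, not just part of it; once (i) and (ii) hold, the claim is the standard orthogonal-complement decomposition of a finite-dimensional inner product space.

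First I would check orthogonality: for any $y=Ax\in\Rc(A)$ and any $z\in\Kc(A^T)$ one has $y\cdot z=(Ax)^T z=x^T(A^T z)=0$, so $\Rc(A)\perp\Kc(A^T)$; in particular $\Rc(A)\cap\Kc(A^T)=\{0\}$ and the sum $\Rc(A)+\Kc(A^T)$ is a genuine (internal, orthogonal) direct sum. Next I would upgrade the inclusion $\Kc(A^T)\subseteq\Rc(A)^{\perp}$ (which is just a restatement of orthogonality) to an equality: if $z\perp\Rc(A)$ then $(Ax)\cdot z=x^T(A^T z)=0$ for all $x\in\mathbb{R}^n$, and choosing $x=A^T z$ gives $\|A^T z\|^2=0$, hence $A^T z=0$, i.e.\ $z\in\Kc(A^T)$. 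Therefore $\Kc(A^T)=\Rc(A)^{\perp}$. Invoking the standard fact that every subspace $V\subseteq\mathbb{R}^m$ of a finite-dimensional inner product space satisfies $\mathbb{R}^m=V\oplus V^{\perp}$, applied with $V=\Rc(A)$, yields $\mathbb{R}^m=\Rc(A)\oplus\Kc(A^T)$. Replacing $A$ by $A^T$ then gives $\mathbb{R}^n=\Rc(A^T)\oplus\Kc(A)$, completing the proof.

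There is no real obstacle here; the single non-elementary ingredient is the existence of an orthogonal complement $V^{\perp}$ with $\mathbb{R}^m=V\oplus V^{\perp}$, which is entirely standard (it follows from Gram--Schmidt orthonormalization of a basis of $\mathbb{R}^m$ adapted to $V$). If one wished to avoid invoking it, the surjectivity step $\Rc(A)+\Kc(A^T)=\mathbb{R}^m$ could instead be obtained by a dimension count: $\dim\Rc(A)+\dim\Kc(A^T)=\operatorname{rank}(A)+\bigl(m-\operatorname{rank}(A^T)\bigr)=m$, using the equality of row rank and column rank, $\operatorname{rank}(A)=\operatorname{rank}(A^T)$, together with rank--nullity. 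Either way the argument is short, and the only thing to be careful about is phrasing $\oplus$ consistently as an \emph{orthogonal} direct sum, matching the convention fixed in the statement.
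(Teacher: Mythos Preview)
Your proof is correct and entirely standard. However, the paper does not actually prove this theorem: it is stated as a well-known result with a reference to Horn and Johnson's \emph{Matrix Analysis}, and then simply invoked as a tool in the subsequent analysis. So there is no ``paper's own proof'' to compare against here; you have supplied a clean argument where the paper gives none, and your approach (orthogonality from $x^T(A^Tz)=0$, then the equality $\Kc(A^T)=\Rc(A)^\perp$, then the orthogonal-complement decomposition) is exactly the textbook route one would expect.
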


Applying Theorem \ref{FTLA} to the matrices $\nabla$ 
and $C^T$ indicates that
$\mathbb{R}^m=\Rc(\nabla)\oplus\Kc{(\nabla^T)}$ and 
$\mathbb{R}^m=\Rc(C^T)\oplus\Kc{(C)}$. Therefore,
from equation \eqref{subspaces} we have that
divergence-free fields are orthogonal to gradients (curl-free fields)
and curl-free fields are orthogonal to $\Rc(C^T)$ (divergence-free fields). 
This is a discrete version of the Helmholz Theorem, which states that curl-free 
and divergence-free fields 
(or, in other words, the gradient and cycle spaces) are mutually 
orthogonal. This also establishes the important relationship 
%
\begin{align}\label{eq:perp_C_grad}
	\Rc(C^T)\perp\Rc(\nabla)\,.    
\end{align}

Orthogonal bases can be given for each of 
the mutually orthogonal spaces in equation \eqref{eq:FTLA} through the singular
value decomposition (SVD) \cite{Horn_Johnson-1990} of a matrix $A=U\Sigma V^T$,
which also provides important information relating the matrices $C$, $\nabla$, etc. 
Here $U$ and $V$ are 
orthogonal matrices of size $m\times m$ and $n\times n$ satisfying $U^T U=UU^T=I_m$ 
and $V^T V=VV^T=I_n$, where $I_m$ is the identity matrix of size $m$. Moreover, $\Sigma$
is a diagonal matrix of size $m\times n$ with diagonal components consisting of the 
positive \emph{singular values} $\nu_i$, $i=1,\ldots,n$, of the matrix $A$ satisfying 
$\nu_1\ge\nu_2\ge\cdots\ge\nu_\rho>0$ and $\nu_{\rho+1}=\nu_{\rho+2}=\cdots=\nu_n=0$,
where $\rho$ is the rank of $A$. Writing the matrix $\Sigma$ in block form we have
\begin{align}\label{eq:block_Sigma}
	\Sigma=\left[
	\begin{array}{cc}
		\Sigma_1&O_1\\
		O_1^T&O_2\\  
		O_3&O_4
	\end{array}
	\right]\,,
\end{align}
where $\Sigma_1$ is a diagonal matrix of size $\rho\times\rho$ with diagonal
consisting of the strictly positive singular values, $O_1$ and $O_2$, are matrices 
of zeros of size $\rho\times(n-\rho)$ and $(n-\rho)\times(n-\rho)$, and the bottom block 
of zeros $[O_3,O_4]$ is of size $(m-n)\times n$.

Write the matrices $U$ and $V$ in block form as
$U=[U_1,U_0,U_2]$ and $V=[V_1,V_0]$, where $U_1$ and $V_1$ are the 
columns of $U$ and $V$ corresponding to the strictly positive singular values in $\Sigma_1$,
$U_0$ and $V_0$ correspond to the the singular values with value zero, $\nu_i=0$, and $U_2$
corresponds to the bottom block of zeros $[O_3,O_4]$ in $\Sigma$. Taking in account all the
blocks of zeros in $\Sigma$, we can write $A=U_1\Sigma_1V_1^T$. 
We now state a well known fact about the SVD of the matrix $A$ \cite{Horn_Johnson-1990}.
\begin{align}\label{svd_col_space}
	\Rc(A)=\text{Col}(U_1),
	\quad
	\Kc(A)=\text{Col}(V_0),
	\quad
	\Rc(A^T)=\text{Col}(V_1),
	\quad
	\Kc(A^T)=\text{Col}([U_0,U_2]),
\end{align}
where $\text{Col}(B)$ denotes the column space of the matrix $B$, i.e., 
the space spanned by the columns of $B$.

Applying the SVD
to the matrices $\nabla=U^\times\Sigma^\times[V^\times]^T$ 
and $C^T=U^\bullet\Sigma^\bullet[V^\bullet]^T$ and using the orthogonality 
of the columns of the matrices $U_1$ and $V_1$ and 
the invertibility of $\Sigma_1$, from $C\nabla=0$ in 
\eqref{identities} we have $[U^\bullet]^T U_\times=0$, and similarly 
$\nabla^TC^T=0$ implies $[U^\times]^T U_\bullet=0$. The formula
$[U^\bullet]^T U_\times=0$ is a restatement of equation 
\eqref{eq:perp_C_grad}. Writing 
$U^\times=[U_1^\times,U_0^\times,U_2^\times]$ and 
$U^\bullet=[U_1^\bullet,U_0^\bullet,U_2^\bullet]$ we have 
established that $\Rc(\nabla)=\text{Col}(U_1^\times)$,
$\Rc(C^T)\subseteq\Kc(\nabla^T)
=\text{Col}([U_0^\times,U_2^\times])$. Also, since
$\Rc(C^T)=\text{Col}(U_1^\bullet)$ and 
$\Rc(C^T)\perp\Rc(\nabla)$, we can 
write
\begin{align}
	U^\times=[U_1^\times,U_0^{\times\bullet},U_1^\bullet]\,,
	\quad
	U^\bullet=[U_1^\bullet,U_0^{\times\bullet},U_1^\times]\,,
\end{align}
where the columns of $U_0^{\times\bullet}$ are orthogonal to both
the $\Rc(C^T)$ and the $\Rc(\nabla)$. Since
$U^\times [U^\times]^T=I_m$ we have the \emph{resolution of 
	the identity}
\begin{align}\label{eq:res_of_identity}
	\Gamma_\times+\Gamma_0+\Gamma_\bullet
	=I_m,
	\qquad
	\Gamma_\times=U_1^\times[U_1^\times]^T\,, 
	\quad
	\Gamma_\bullet=U_1^\bullet[U_1^\bullet]^T\,,
	\quad
	\Gamma_0=U_0^{\times\bullet}[U_0^{\times\bullet}]^T\,,
\end{align}
where $\Gamma_\times$, $\Gamma_\bullet$, and $\Gamma_0$, 
are mutually orthogonal projection matrices onto $\Rc(\nabla)$, 
$\Rc(C^T)$ and the orthogonal complement of 
$\Rc(\nabla)\cup\Rc(C^T)$. We are now ready to 
state the main result of this section as the following theorem.
\begin{theorem}\label{thm:existence_E}
Let the electric and current fields $\vecE$ and $\vecJ$ satisfy
\begin{align}\label{eq:Maxwells_Equations_discrete_E}
	C\vecE=0, 
	\quad -\nabla^T \vecJ=0, 
	\quad \vecJ=\beps \vecE.
\end{align}
Then, there exists a ``potential'' $\varphi$ and vector $\vecE_0$ such that
$\vecE=\vecE_f+\vecE_0$, where $\vecE_f=\nabla\varphi$,
$C\vecE_f=0$, 
$\langle\vecJ\bcdot\vecE_f\rangle=0$, $\langle\vecE_f\rangle=0$, and
$I_d\otimes(\nabla^T\nabla))\vecE_0=0$, thus $\vecE_0$ is an arbitrary constant.
\end{theorem}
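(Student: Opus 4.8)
The plan is to construct $\vecE$ explicitly by splitting the ambient space $\mathbb{R}^m$ using the resolution of the identity $\Gamma_\times+\Gamma_0+\Gamma_\bullet=I_m$ established in \eqref{eq:res_of_identity}, and then to verify each claimed property by an orthogonality argument. First I would set $\vecE_f=\Gamma_\times\vecE$, so that automatically $\vecE_f\in\Rc(\nabla)=\text{Col}(U_1^\times)$, which gives a potential $\varphi$ with $\vecE_f=\nabla\varphi$ (take $\varphi=\Sigma_1^{-1}[V_1^\times]^T\vecE$ from the SVD $\nabla=U_1^\times\Sigma_1^\times[V_1^\times]^T$). Since $\Rc(\nabla)\subseteq\Kc(C)$ by \eqref{subspaces}, we get $C\vecE_f=0$ immediately, and since $\vecE\in\Kc(C)$ by hypothesis, also $\vecE_0:=\vecE-\vecE_f=(\Gamma_0+\Gamma_\bullet)\vecE\in\Kc(C)$. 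The identity $(I_d\otimes(\nabla^T\nabla))\vecE_0=0$ should follow from the fact that $\nabla^T\nabla$ acting on the components of $\vecE_0$ vanishes because $\vecE_0$ lies in the orthogonal complement of $\Rc(\nabla)$; more precisely one uses $\nabla^T\vecE_0=\nabla^T(\Gamma_0+\Gamma_\bullet)\vecE=0$ since $\Rc(\Gamma_0+\Gamma_\bullet)=\Kc(\nabla^T)$ by \eqref{svd_col_space}, and then the Kronecker-product identity from \eqref{identities} together with $\Delta\to-\nabla^T\nabla$ closes the argument.

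The remaining two properties are the energy constraint $\langle\vecJ\bcdot\vecE_f\rangle=0$ and the mean-zero condition $\langle\vecE_f\rangle=0$. For the energy constraint, I would use $\vecJ\in\Kc(\nabla^T)$ (hypothesis) and $\vecE_f\in\Rc(\nabla)$, so $\vecJ\perp\vecE_f$ pointwise-per-realization after the inner product, hence after averaging; this is exactly the discrete Helmholtz orthogonality $\Rc(\nabla)\oplus\Kc(\nabla^T)$ from Theorem~\ref{FTLA}. For $\langle\vecE_f\rangle=0$, the point is that the volume average of a discrete gradient $\nabla\varphi$ vanishes under periodic boundary conditions: the constant vector lies in $\Kc(\nabla^T\nabla)$ (equivalently each $C_i$ annihilates constants and constants are orthogonal to $\Rc(\nabla)$), so the zero-frequency component of $\vecE_f$ is zero, which is precisely what $\langle\vecE_f\rangle=0$ says. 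I would phrase this as: the volume-averaging functional is (up to normalization) the inner product against the constant vector $\boldsymbol{1}$, and $\boldsymbol{1}\in\Kc(\nabla^T)$'s ``constant part'' — more carefully, $\boldsymbol{1}$ restricted appropriately is orthogonal to $\Rc(\nabla)$ — so $\langle\vecE_f\rangle=\langle\nabla\varphi\rangle\propto\boldsymbol{1}^T\nabla\varphi=(\nabla^T\boldsymbol{1})^T\varphi=0$.

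The main obstacle I anticipate is being careful about the role of the $\Gamma_0$ block and the precise meaning of $\langle\cdot\rangle$ as "volume average followed by ensemble average" in this finite-dimensional setting. One must check that the decomposition $\vecE=\Gamma_\times\vecE+(\Gamma_0+\Gamma_\bullet)\vecE$ is consistent with $\vecE_0$ being "an arbitrary constant" — i.e., that the hypotheses \eqref{eq:Maxwells_Equations_discrete_E} do not over-determine $\vecE$, and that the component of $\vecE$ in $\Rc(\Gamma_0)\oplus\Rc(\Gamma_\bullet)$ genuinely reduces to the constant field once the problem is posed with a prescribed mean $\vecE_0$. In fact I expect the cleanest logical order is the reverse of the statement: \emph{assume} a solution pair $(\vecE,\vecJ)$ of \eqref{eq:Maxwells_Equations_discrete_E} is given (existence of such a pair for given applied field being the substance of the section's later results or a standard Lax–Milgram/finite-dimensional linear-algebra argument), \emph{define} $\vecE_f=\Gamma_\times\vecE$ and $\vecE_0=\vecE-\vecE_f$, and then the five listed conclusions are each a one-line consequence of \eqref{subspaces}, \eqref{eq:res_of_identity}, the SVD facts \eqref{svd_col_space}, and Theorem~\ref{FTLA}. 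I would also remark that $\vecE_0$ need not be literally spatially constant unless one additionally invokes that $\Gamma_\bullet\vecE$ (the $\Rc(C^T)$ component) is forced to vanish; this is where the hypothesis $C\vecE=0$ combined with $\Rc(C^T)\perp\Kc(C)$ is used to conclude $\Gamma_\bullet\vecE=0$, leaving $\vecE_0=\Gamma_0\vecE$, which is annihilated by $\nabla^T\nabla$ and hence (under periodic BCs, where $\Kc(\nabla^T\nabla)$ is spanned by the constant vector) is indeed constant.
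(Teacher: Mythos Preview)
Your approach is essentially identical to the paper's: both use the resolution of the identity \eqref{eq:res_of_identity} to set $\vecE_f=\Gamma_\times\vecE$ and $\vecE_0=\Gamma_0\vecE$ (after killing $\Gamma_\bullet\vecE$ via $C\vecE=0$ and $\Rc(C^T)\perp\Kc(C)$, exactly as you note in your last paragraph), and then invoke the vector-Laplacian identity $C^TC=-\nabla\nabla^T+I_d\otimes(\nabla^T\nabla)$ on $\vecE_0\in\Kc(C)\cap\Kc(\nabla^T)$ to conclude $\vecE_0$ is constant. Two small slips to fix: the potential should be $\varphi=V_1^\times[\Sigma_1^\times]^{-1}[U_1^\times]^T\vecE$ (your $[V_1^\times]^T\vecE$ is not dimensionally consistent), and your first heuristic ``$\vecE_0\perp\Rc(\nabla)$ implies each component is harmonic'' is false on its own---it is precisely the $C^TC$ identity you cite in your ``more precisely'' (and which the paper uses) that closes that step.
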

\begin{proof}
Let $\vecE$ be the solution to equation \eqref{eq:Maxwells_Equations_discrete_E}. 
From the resolution of the identity in equation \eqref{eq:res_of_identity}
we have $\vecE=(\Gamma_\times+\Gamma_0+\Gamma_\bullet)\vecE$.
Since $\Gamma_\bullet$ projects onto the $\Rc(C^T)$,
$\mathbb{R}^m=\Rc(C^T)\oplus\Kc{(C)}$, and 
$\vecE\in\Kc{(C)}$ we have $\Gamma_\bullet \vecE=0$. Defining
$\vecE_f=\Gamma_\times \vecE$, since 
$U_1^\times=\nabla V_1^\times[\Sigma_1^\times]^{-1}$, we can write 
$\vecE_f=\nabla\varphi$, where
$\varphi=V_1^\times[\Sigma_1^\times]^{-1}[U_1^\times]^T\vecE$.

Define 
$\vecE_0=\Gamma_0\vecE$. Since $\Gamma_\times \vecE_0=0$, $\Gamma_\times$ is a 
projection onto $\Rc(\nabla)$, and 
$\mathbb{R}^m=\Rc(\nabla)\oplus\Kc{(\nabla^T)}$, we have 
$\vecE_0\in\Kc(\nabla^T)$. Similarly, since $\Gamma_\bullet \vecE_0=0$,
$\Gamma_\bullet$ is a 
projection onto $\Rc(C^T)$, and $\mathbb{R}^m=\Rc(C^T)\oplus\Kc{(C)}$, we have 
$\vecE_0\in\Kc(C)$. Since $\vecE_0\in\Kc(C)\cap\Kc(\nabla^T)$
we have from \eqref{identities} that
\begin{align}
\label{eq:discrete_solution_E0}	
0=C^TC\vecE_0 
= (-\nabla\nabla^T + I_d\otimes(\nabla^T\nabla))\vecE_0
= I_d\otimes(\nabla^T\nabla))\vecE_0,
\end{align}
which determines $\vecE_0$ and establishes that the solution to \eqref{eq:Maxwells_Equations_discrete_E}
can be written as $\vecE=\nabla\varphi+\vecE_0$.
Equation \eqref{eq:discrete_solution_E0} implies that each dimensional component 
$(\vecE_0)_i$, $i=1,\ldots,d$, which is a vector in this discrete setting \cite{Murphy:2015:CMS:13:4:825},
satisfies $\nabla^T\nabla(\vecE_0)_i$. Therefore,
\begin{align}
0=\nabla^T\nabla(\vecE_0)_i\bcdot(\vecE_0)_i=
\nabla(\vecE_0)_i\bcdot\nabla(\vecE_0)_i=
\|\nabla(\vecE_0)_i\|^2\,,
\end{align}
where $\|\cdot\|$ denotes $\ell^2$-norm. It follows that $(\vecE_0)_i$ is constant for each $i=1,\ldots,d$,
hence $\vecE_0$ is an arbitrary constant vector.

From $\Rc(\nabla)\subseteq\Kc(C)$ in equation
\eqref{subspaces} and $\vecE_f=\nabla\varphi$ we have 
$C\vecE_f=C\nabla\varphi=0$. We also have from $\nabla^T\vecJ=0$ that
$\vecJ\bcdot \vecE_f=\vecJ\bcdot\nabla\varphi=\nabla^T\vecJ\bcdot\varphi=0$.
Finally, the volume average of $\nabla\varphi$ is a telescoping sum, 
so $\langle \vecE_f\rangle=0$. This concludes our proof of the theorem.
\end{proof}

\begin{corollary}
\label{cor:existence_J}
Let the electric and current fields $\vecE$ and $\vecJ$ satisfy
\begin{align}\label{eq:Maxwells_Equations_discrete_J}
	C\vecE=0, 
	\quad -\nabla^T \vecJ=0, 
	\quad \vecE=\brho \vecJ.
\end{align}
Then, there exists a ``potential'' $\psi$ and vector $\vecJ_0$ such that
$\vecJ=\vecJ_f+\vecJ_0$, where $\vecJ_f=C^T\psi$, $\nabla^T \vecJ_f=0$, 
$\langle\vecJ_f\bcdot\vecE\rangle=0$,  $\langle\vecJ_f\rangle=0$, and
$I_d\otimes(\nabla^T\nabla))\vecJ_0=0$, thus $\vecJ_0$ is an arbitrary constant.
\end{corollary}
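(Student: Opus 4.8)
The plan is to mirror the proof of \thmref{thm:existence_E} essentially verbatim, exchanging the roles of the discrete gradient $\nabla$ and transposed curl $C^T$, and of the fields $\vecE$ and $\vecJ$. The resolution of the identity $\Gamma_\times+\Gamma_0+\Gamma_\bullet=I_m$ in \eqref{eq:res_of_identity} is symmetric under this exchange — $\Gamma_\times$ projects onto $\Rc(\nabla)$ and $\Gamma_\bullet$ onto $\Rc(C^T)$ — so the argument transfers directly. First I would apply this resolution of the identity to $\vecJ$, writing $\vecJ=(\Gamma_\times+\Gamma_0+\Gamma_\bullet)\vecJ$. Since $\Gamma_\times$ is the projection onto $\Rc(\nabla)$, $\mathbb{R}^m=\Rc(\nabla)\oplus\Kc(\nabla^T)$, and $\vecJ\in\Kc(\nabla^T)$ by \eqref{eq:Maxwells_Equations_discrete_J}, the component $\Gamma_\times\vecJ$ vanishes. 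I then set $\vecJ_f=\Gamma_\bullet\vecJ$ and $\vecJ_0=\Gamma_0\vecJ$, so that $\vecJ=\vecJ_f+\vecJ_0$. Using $U_1^\bullet=C^TV_1^\bullet[\Sigma_1^\bullet]^{-1}$ from the SVD of $C^T$, the field $\vecJ_f$ can be written as $\vecJ_f=C^T\psi$ with $\psi=V_1^\bullet[\Sigma_1^\bullet]^{-1}[U_1^\bullet]^T\vecJ$.

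Next I would pin down $\vecJ_0$. Since $\Gamma_\times\vecJ_0=0$ and $\Gamma_\bullet\vecJ_0=0$, the orthogonal decompositions $\mathbb{R}^m=\Rc(\nabla)\oplus\Kc(\nabla^T)$ and $\mathbb{R}^m=\Rc(C^T)\oplus\Kc(C)$ force $\vecJ_0\in\Kc(\nabla^T)\cap\Kc(C)$. The curl identities in \eqref{identities} then give
\begin{align*}
0=C^TC\vecJ_0=(-\nabla\nabla^T+I_d\otimes(\nabla^T\nabla))\vecJ_0=I_d\otimes(\nabla^T\nabla)\vecJ_0\,,
\end{align*}
and, exactly as in the theorem, pairing each dimensional component $(\vecJ_0)_i$ with itself yields $\|\nabla(\vecJ_0)_i\|^2=0$, so each $(\vecJ_0)_i$ is constant and $\vecJ_0$ is an arbitrary constant vector.

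Finally I would verify the remaining properties of $\vecJ_f=C^T\psi$. The inclusion $\Rc(C^T)\subseteq\Kc(\nabla^T)$ from \eqref{subspaces} gives $\nabla^T\vecJ_f=\nabla^TC^T\psi=0$. The hypothesis $C\vecE=0$ gives $\vecJ_f\bcdot\vecE=(C^T\psi)\bcdot\vecE=\psi\bcdot(C\vecE)=0$ pointwise, hence $\langle\vecJ_f\bcdot\vecE\rangle=0$. And the volume average of $C^T\psi$ is a telescoping sum over the periodic lattice — the rows of $C^T$, like those of $\nabla$, are differences of shift operators and sum to zero — so $\langle\vecJ_f\rangle=0$. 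I do not expect a genuine obstacle here; the only point requiring a word of care is the vanishing of $\langle C^T\psi\rangle$, which rests on the same finite-difference/telescoping structure already invoked for $\langle\nabla\varphi\rangle$ in \thmref{thm:existence_E}.
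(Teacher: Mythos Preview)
Your proposal is correct and follows essentially the same route as the paper's own proof: apply the resolution of the identity \eqref{eq:res_of_identity} to $\vecJ$, kill $\Gamma_\times\vecJ$ via $\vecJ\in\Kc(\nabla^T)$, set $\vecJ_f=\Gamma_\bullet\vecJ=C^T\psi$ and $\vecJ_0=\Gamma_0\vecJ$, then repeat the $\Kc(C)\cap\Kc(\nabla^T)$ argument from \thmref{thm:existence_E} to force $\vecJ_0$ constant and verify the remaining properties of $\vecJ_f$. Your treatment of $\vecJ_f\bcdot\vecE=\psi\bcdot(C\vecE)=0$ and of $\langle C^T\psi\rangle=0$ via the telescoping finite-difference structure is in fact a bit more explicit than the paper, which simply refers back to the theorem for these steps.
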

\begin{proof}
Let $\vecJ$ be the solution to equation \eqref{eq:Maxwells_Equations_discrete_J}. 
From the resolution of the identity in equation \eqref{eq:res_of_identity}
we have $\vecJ=(\Gamma_\times+\Gamma_0+\Gamma_\bullet)\vecJ$.
Since $\Gamma_\times$ projects onto the $\Rc(\nabla)$,
$\mathbb{R}^m=\Rc(\nabla)\oplus\Kc(\nabla^T)$, and 
$\vecJ\in\Kc(\nabla^T)$ we have $\Gamma_\times \vecJ=0$. Defining 
$\vecJ_f=\Gamma_\bullet \vecJ$, since 
$U_1^\bullet=C^T V_1^\bullet[\Sigma_1^\bullet]^{-1}$, we can write 
$\vecJ_f=C^T\psi$, where
$\psi=V_1^\bullet[\Sigma_1^\bullet]^{-1}[U_1^\bullet]^T\vecJ$. Define 
$\vecJ_0=\Gamma_0 \vecJ$. Exactly the same as in the proof of Theorem
\ref{thm:existence_E}, we have $I_d\otimes(\nabla^T\nabla))\vecJ_0=0$,
which determines $\vecJ_0$ to be an arbitrary constant vector and establishes that the solution to \eqref{eq:Maxwells_Equations_discrete_J}
can be written as $\vecJ=C^T\psi+\vecJ_0$.
From $\Rc(C^T)\subseteq\Kc(\nabla^T)$ in equation
\eqref{subspaces} and $\vecJ_f=C^T\psi$ we have 
$\nabla^T\vecJ_f=\nabla^TC^T\psi=0$. Exactly the same as in the proof 
of Theorem \ref{thm:existence_E}, we also have $\vecJ_f\bcdot\vecE=0$ and 
$\langle \vecJ_f\rangle=0$. This concludes our proof of the theorem.
\end{proof}
We end this section by noting that in the full rank setting, where 
$\Sigma$ has all strictly positive singular values, so $\Sigma_1=\Sigma$, 
then 
\begin{align}\label{eq:GammaCurlDiscrete}
	\Gamma_\times = \nabla(\nabla^T\nabla)^{-1}\nabla^T\,,
	\qquad
	\Gamma_\bullet = C^T(CC^T)^{-1}C\,.
\end{align}
The original formulations of this mathematical framework was given in terms of these projection matrices, or continuum generalizations \cite{Golden:CMP-473,Murphy:2015:CMS:13:4:825}. 
The formulation given in this section
generalizes the discrete setting to cases where the matrix gradient is rank deficient,
such as the case of periodic boundary conditions. To simplify notation to that used in \cite{Murphy:2024:arXiv:PolycrystalTheoryContinuum}, 
for the remainder of the manuscript we will denote $\Gamma=\Gamma_\times$ and $\Upsilon=\Gamma_\bullet$.

\subsection{Resolvent representations for transport fields}
\label{sec:resolvent_reps}
In this section, we utilize the results of Theorem \ref{thm:existence_E} to explicitly solve the system of 
equations in  \eqref{eq:Maxwells_Equations_discrete_E}  for $\vecE$. We accomplish this by deriving 
resolvent formulas for the fields $X_i\vecE$, $i=1,2$, in \eqref{eq:Electric_Field} in terms of the real-symmetric matrices $X_i\Gamma X_i$, $i=1,2$. The electric field can then be constructed using 
$\vecE=X_1\vecE+X_2\vecE$ and the current density can be constructed using 
$\vecJ=\sigma_1X_1\vecE+\sigma_2X_2\vecE$. Similarly, we utilize the results of Corollary 
\ref{cor:existence_J} to explicitly solve the system of 
equations in  \eqref{eq:Maxwells_Equations_discrete_J} for $\vecJ$ by deriving 
resolvent formulas for the fields $X_i\vecJ$ , $i=1,2$, in \eqref{eq:Current_Field} 
in terms of the real-symmetric matrices $X_i\Upsilon X_i$, $i=1,2$. The current field can then be constructed 
using $\vecJ=X_1\vecJ+X_2\vecJ$ and the electric field can be constructed using
$\vecE=\sigma_1^{-1}X_1\vecJ+\sigma_2^{-1}X_2\vecJ$. These resolvent formulas provide eigenvector expansions
for the physical fields $\vecE$ and $\vecJ$, which can be used to numerically calculate them, 
which we do in Section \ref{sec:Numerical_Results}.

We provide a detailed derivation of the resolvent formula shown in  \eqref{eq:Electric_Field} for the setting 
where the gradient matrix $\nabla$ is full rank and rank deficient. Other resolvent formulas are obtained in
an analogous manner. From Theorem \ref{thm:existence_E} and equation \eqref{eq:Maxwells_Equations_discrete_E}
we can apply the operator $\nabla(\nabla^T\nabla)^{-1}$ to the formula $-\nabla^T\vecJ=0$ to obtain
$\Gamma\vecJ=0$, where $\Gamma=\nabla(\nabla^T\nabla)^{-1}\nabla^T$ is a projection onto $\Rc(\nabla)$.
In the full rank setting, the SVD for $\nabla=U_1\Sigma_1 V_1^T$, where all singular values have strictly positive value.
In this case, the square $n\times n$ diagonal matrix $\Sigma_1$ is invertible, $U_1$ is an $m\times n$ with left inverse satisfying 
$U_1^T U_1=I_m$, and $V_1$ is an $n\times n$ unitary matrix satisfying $V_1V_1^T=V_1^T V_1=I_n$ \cite{Horn_Johnson-1990}.
Using these matrix properties, we have $\Gamma=\nabla(\nabla^T\nabla)^{-1}\nabla^T=U_1U_1^T$
\cite{Murphy:ADSRF-2019}.

When $\nabla$ is rank deficient, one or more of the singular values have value 0. We can still write $\nabla=U_1\Sigma_1 V_1^T$  \cite{Horn_Johnson-1990,Murphy:ADSRF-2019}. However, now the diagonal matrix $\Sigma_1$ is smaller than in the full rank setting but it is still invertible, 
and both the matrices $U_1$ and $V_1$ just have left inverses satisfying $V_1^T V_1=I_n$ and $U_1^T U_1=I_m$ \cite{Horn_Johnson-1990,Murphy:ADSRF-2019}. The key observation is that we can use these matrix properties to again write $-\nabla^T\vecJ=0$ as $\Gamma\vecJ=0$ where we \emph{define} $\Gamma=U_1U_1^T$, and $\Gamma$ is still a projection on to $\Rc(\nabla)$ \cite{Horn_Johnson-1990,Murphy:ADSRF-2019}.

Writing equation \eqref{eq:sigma_two-phase} as $\bsigma=\sigma_2(1-X_1/s)$,
using the results of Theorem \ref{thm:existence_E} to write $\Gamma\vecE_f=\vecE_f$, $\Gamma\vecE_0=0$, and
$\vecJ=\sigma_2(1-X_1/s)(\vecE_f+\vecE_0)$, the formula $\Gamma\vecJ=0$ can be written as 
$\vecE_f-\Gamma X_1\vecE/s=0$ \cite{Murphy:2015:CMS:13:4:825}. Adding $\vecE_0$ to both sides then applying 
$X_1$ to the left of both sides, using $X_1^2=X_1$, and rearranging yields 
\begin{align}
(sI-X_1\Gamma X_1)X_1\vecE=sX_1\vecE_0,
\end{align}
which is equivalent to equation \eqref{eq:Electric_Field}. In a similar fashion one finds
\begin{align}
&X_1\vecE=s(sI-X_1 \Gamma X_1)^{-1}X_1\vecE_0,
\\\notag
&X_2\vecE=t(tI-X_2 \Gamma X_2)^{-1}X_2\vecE_0,
\\\notag
&X_1\vecJ=t(tI-X_1 \Upsilon X_1)^{-1}X_1\vecJ_0,
\\\notag
&X_2\vecJ=s(sI-X_2 \Upsilon X_2)^{-1}X_2\vecJ_0\,.	
\notag
\end{align}
These formulas determine the physical fields $\vecE$ and $\vecJ$ via
$\vecE=X_1\vecE+X_2\vecE$ and $\vecJ=\sigma_1X_1\vecE+\sigma_2X_2\vecE$.

\section{Stieltjes integrals for bulk transport coefficients}
\label{sec:Stieltjes_integrals}
We start this section with a high level description of discrete Stieltjes integral representations for the bulk transport
coefficients of uniaxial polycrystalline materials. We then utilize properties of the projection matrices in
$X_1\Gamma X_1$, for example, to provide a more detailed analysis and develop a projection method 
to show that the spectral measure can be computed by much smaller matrices, which leads to a more efficient and stable numerical algorithm for the computation of bulk transport coefficients and physical fields.

Using the the resolvent formula for $X_1\vecE$ in equation \eqref{eq:Electric_Field}, the energy functional in 
equation \eqref{eq:sigma*}, $\langle\vecE_f\rangle=0$, and $\vecE=E_0\veceh_k$ we have
\begin{align}
	\label{eq:discrete_resolvent_energy}
	\langle\vecJ\bcdot\vecE_0\rangle=
	\sigma_2E_0^2(1-\langle X_1\vecE\bcdot X_1\veceh_k \rangle/s)=
	\sigma_2E_0^2(1-\langle (sI-X_1\Gamma X_1)^{-1}X_1\veceh_k\bcdot X_1\veceh_k\rangle),
\end{align}
where $\veceh_k$ is the discretized version of the standard basis vector $\vece_k$ \cite{Murphy:2015:CMS:13:4:825}. 
Since the matrix $X_1\Gamma X_1$ is real-symmetric, it can be written \cite{Horn_Johnson-1990} as 
$X_1\Gamma X_1=W\Lambda W^T$, 
where the $\Lambda$ is a diagonal matrix with diagonal entries comprised of the real eigenvalues $\lambda_i$ 
of the matrix $X_1\Gamma X_1$ and the columns of $W$ are composed of the orthonormal eigenvectors $\vecw_i$ of the matrix, satisfying $WW^T=W^T W=I$. Therefore, we have \cite{Murphy:2015:CMS:13:4:825}
\begin{align}\label{eq:Fkk}
	F_{kk}(s)=
	\langle (sI-X_1\Gamma X_1)^{-1}X_1\veceh_k\bcdot X_1\veceh_k\rangle=
	\langle (sI-\Lambda)^{-1}W^T X_1\veceh_k\bcdot W^T X_1\veceh_k\rangle\,,
\end{align}
which can be written as the summation formula for $F_{kk}(s)$ in equation \eqref{eq:Discrete_E_F(s)}.

The following theorem is a refinement of this result and provides a rigorous mathematical formulation of
integral representations for the effective parameters for finite
lattice approximations of random uniaxial polycrystalline media. 
%
\begin{theorem} \label{thm:Discrete_Spectral_Theorem_ACM_polycrystal}
	For each $\omega\in\Omega$, let $M(\omega)=W(\omega)\Lambda(\omega)\,W(\omega)$ be the eigenvalue
	decomposition of the real-symmetric matrix
	$M(\omega)=X_1(\omega)\,\Gamma\,X_1(\omega)$. Here, the columns of the matrix $W(\omega)$
	consist of the orthonormal eigenvectors $\vecwt_i(\omega)$, $i=1,\ldots,N$,
	of $M(\omega)$ and the diagonal matrix $\Lambda(\omega)={\rm diag}(\lambda_1(\omega),\ldots,\lambda_N(\omega))$
	involves its eigenvalues $\lambda_i(\omega)$. 
	Denote $Q_i=\vecwt_i\,\vecwt_i^{\;T}$ the projection matrix onto the eigen-space
	spanned by $\vecwt_i$.
	The electric field $\vecE(\omega)$ satisfies $\vecE(\omega)=\vecE_0+\vecE_f(\omega)$, with
	$\vecE_0=\langle\vecE(\omega)\rangle$ and $\Gamma\vecE(\omega)=\vecE_f(\omega)$, and the
	effective complex permittivity tensor $\beps^*$ has components
	$\epsilon^*_{jk}$,
	$j,k=1,\ldots,d$,  which satisfy       
	\begin{align}\label{eq:Stieltjes_F_Discrete_polycrystal}
		&\epsilon^*_{jk}=\epsilon_2(\delta_{jk}-F_{jk}(s)), 
		&&F_{jk}(s)=\int_0^1\frac{\d\mu_{jk}(\lambda)}{s-\lambda}\,, 
		&&\d\mu_{jk}(\lambda)=\sum_{i=1}^N\langle \delta_{\lambda_i}(\d\lambda)\;X_1\,Q_i\hat{e}_j\bcdot\hat{e}_k\rangle\,.  
	\end{align}
\end{theorem}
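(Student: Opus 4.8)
The statement is really a bookkeeping assembly of facts already established in Sections \ref{sec:resolvent_reps} and \ref{sec:Stieltjes_integrals}, promoted to the level of a clean spectral theorem. The plan is to start from the resolvent representation $X_1\vecE=s(sI-X_1\Gamma X_1)^{-1}X_1\vecE_0$ of equation \eqref{eq:Electric_Field}, together with the decomposition $\vecE=\vecE_0+\vecE_f$ and the properties $\Gamma\vecE_f=\vecE_f$, $\Gamma\vecE_0=0$ coming from Theorem \ref{thm:existence_E}, which immediately give $\Gamma\vecE=\vecE_f$ as claimed. The core of the proof is then to invoke, for each fixed $\omega\in\Omega$, the real-symmetric eigendecomposition $M(\omega)=W(\omega)\Lambda(\omega)W(\omega)^T$ and substitute it into $F_{jk}(s)=\langle(sI-X_1\Gamma X_1)^{-1}X_1\veceh_j\bcdot\veceh_k\rangle$ (the mixed-component generalization of \eqref{eq:Fkk}, obtained by the same manipulation that produced \eqref{eq:discrete_resolvent_energy}).

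**Key steps in order.** First I would record the field decomposition consequences of Theorem \ref{thm:existence_E}: $\vecE_0=\langle\vecE\rangle$, $\vecE_f=\Gamma\vecE$, $\langle\vecE_f\rangle=0$. Second, I would derive the resolvent representation of $F_{jk}(s)$ by applying $X_1$ to the resolvent formula and pairing against $\veceh_k$, using $X_1^2=X_1$ and symmetry of $X_1$ to write $F_{jk}(s)=\langle(sI-X_1\Gamma X_1)^{-1}X_1\veceh_j\bcdot X_1\veceh_k\rangle$. Third, substitute $M=W\Lambda W^T$ and use $(sI-M)^{-1}=W(sI-\Lambda)^{-1}W^T$, which diagonalizes the resolvent; expanding $W(sI-\Lambda)^{-1}W^T=\sum_i (s-\lambda_i)^{-1}Q_i$ with $Q_i=\vecwt_i\vecwt_i^{\,T}$ turns the pairing into $\sum_i (s-\lambda_i)^{-1}\langle X_1 Q_i\veceh_j\bcdot\veceh_k\rangle$. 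Fourth, recognize this sum as the Stieltjes integral $\int_0^1 \d\mu_{jk}(\lambda)/(s-\lambda)$ with $\d\mu_{jk}(\lambda)=\sum_i\langle\delta_{\lambda_i}(\d\lambda)\,X_1 Q_i\veceh_j\bcdot\veceh_k\rangle$, which is exactly \eqref{eq:Stieltjes_F_Discrete_polycrystal}; the support is contained in $[0,1]$ because $X_1\Gamma X_1$ is a product of orthogonal projections compressed to the range of $X_1$, hence has spectrum in $[0,1]$ (a standard fact, also used implicitly for the continuum operator in Section \ref{sec:continuous_setting}). Finally, the relation $\epsilon^*_{jk}=\epsilon_2(\delta_{jk}-F_{jk}(s))$ is just the complex-permittivity instance of \eqref{eq:F(s)}, i.e. $F_{jk}(s)=\delta_{jk}-\epsilon^*_{jk}/\epsilon_2$, rearranged.

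**Main obstacle.** There is no deep obstacle — the content is the translation of operator-theoretic spectral calculus into the finite-dimensional real-symmetric setting, where it reduces to diagonalization. The one point requiring genuine care is the positivity/localization of the measure: one must check that the eigenvalues $\lambda_i$ of $X_1\Gamma X_1$ lie in $[0,1]$ so that the integrals are over $[0,1]$ as asserted, and that the diagonal measures $\mu_{kk}$ are positive (since $\langle X_1 Q_i\veceh_k\bcdot\veceh_k\rangle=\langle\|Q_i X_1\veceh_k\|^2\rangle\geq 0$ using $X_1 Q_i=X_1\vecwt_i\vecwt_i^T$ and $X_1\vecwt_i=\vecwt_i$, the latter because $\vecwt_i\in\Rc(X_1)$ as an eigenvector of $X_1\Gamma X_1$ with the range restriction), whereas the off-diagonal ones are signed. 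The spectral bound $\lambda_i\in[0,1]$ follows from $0\preceq X_1\Gamma X_1\preceq X_1\preceq I$ on $\Rc(X_1)$, since $\Gamma$ is an orthogonal projection; I would state this as a short lemma-free remark. Everything else is routine substitution and matches the summation formulas \eqref{eq:Discrete_E_F(s)} and \eqref{eq:Stieltjes_F_Discrete} already displayed.
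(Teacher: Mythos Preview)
Your outline is correct and lands on the same resolvent-diagonalization skeleton as the paper, but the paper's proof takes a more constructive route that you have bypassed. Rather than invoking the abstract spectral theorem for $M=X_1\Gamma X_1$, the paper exploits the concrete factorization $X_1=R^TCR$ with $C=\mathrm{diag}(I_1,0)$ to write $X_1\Gamma X_1=R^T\mathrm{diag}(\Gamma_1,0)R$, where $\Gamma_1$ is the upper-left $N_1\times N_1$ block of $R\Gamma R^T$. This block form \emph{is} the eigendecomposition (up to diagonalizing $\Gamma_1=W_1\Lambda_1W_1^T$), and it delivers three things at once: the spectral bound $\lambda_i\in[0,1]$ (because $\Gamma_1$ is a principal submatrix of a projection), the identity $X_1\vecw_i\in\{\vecw_i,0\}$ of equation~\eqref{eq:Projection_Eigenspace} (hence $X_1Q_i=Q_iX_1=X_1Q_iX_1$), and the data for the projection method of Section~\ref{projection_method}. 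Your abstract route proves the bare theorem but does not set up the $N_1\times N_1$ reduction, which is the paper's real payoff.

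One point needs repair. You assert $X_1\vecwt_i=\vecwt_i$ for every eigenvector, ``because $\vecwt_i\in\Rc(X_1)$.'' This is false for an \emph{arbitrary} orthonormal eigenbasis: zero-eigenvalue eigenvectors may lie in $\Kc(X_1)$, or worse, may mix $\Rc(X_1)$ and $\Kc(X_1)$. The integral formula itself survives, since $X_1M=MX_1=M$ gives $[X_1,(sI-M)^{-1}]=0$ and hence $(sI-M)^{-1}X_1\hat e_j\cdot\hat e_k=\sum_i(s-\lambda_i)^{-1}X_1Q_i\hat e_j\cdot\hat e_k$ directly. But your positivity argument for the individual weights $m_i=X_1Q_i\hat e_k\cdot\hat e_k$ breaks at $\lambda_i=0$; positivity of $\mu_{kk}$ at a spectral point $\lambda_0$ follows instead from the fact that $\sum_{\lambda_i=\lambda_0}X_1Q_i=X_1P_{\lambda_0}=P_{\lambda_0}X_1P_{\lambda_0}\succeq 0$, or, as the paper does, by choosing the eigenbasis adapted to the block structure so that $X_1\vecw_i\in\{\vecw_i,0\}$ holds termwise.
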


Theorem \ref{thm:Discrete_Spectral_Theorem_ACM_polycrystal} 
holds for both of the settings where the matrix gradient
is full rank or rank deficient.
To numerically compute the $\mu_{jk}$ a non-standard generalization of the spectral theorem for
matrices is required, due to the projective nature of the matrices $X_1$ and $\Gamma$
\cite{Murphy:2015:CMS:13:4:825}. In particular, we develop a
\emph{projection method} that shows the spectral measure $\mu_{jk}$ in
\eqref{eq:Stieltjes_F_Discrete_polycrystal} depends only on the eigenvalues and eigenvectors of the 
upper left $N_1\times N_1$ block of the matrix $R\Gamma R^T$, where $N_1=N/d$. These
submatrices are smaller by a factor of $d$, which improves the efficiency and 
numerical computations of $\mu$ by a factor of $d^3$ \cite{Demmel:1997,Trefethen:1997:NLA}.

In this section we provide the proof for Theorem
\ref{thm:Discrete_Spectral_Theorem_ACM_polycrystal} and a projection
method for a numerically efficient projection method for computation
of spectral measures and effective parameters for uniaxial polycrystalline media.
We will use the results from Section \ref{sec:field_decomp} but for notational 
simplicity we will use $\Gamma$ instead of $\Gamma_\times$. Corollary
\ref{cor:Discrete_Spectral_Theorem_ACM} below follows immediately from the proof of 
Theorem \ref{thm:Discrete_Spectral_Theorem_ACM_polycrystal} and the results of Section
\ref{sec:field_decomp}, which provides an integral representation
for the effective resistivity $\brho^*$ involving the matrix $X_2\Gamma_\bullet X_2$ and is
analogous to the representation of $\brho^*$ for the two-component composite  
setting in \cite{Murphy:2015:CMS:13:4:825}.

From the close
analogues of this polycrystalline setting with the two-component setting discussed in 
\cite{Murphy:2024:arXiv:PolycrystalTheoryContinuum}, the proof of this theorem is analogous to Theorem 2.1 in 
\cite{Murphy:2015:CMS:13:4:825}. 
To shorten the theorem proof here, we will refer to 
\cite{Murphy:2015:CMS:13:4:825} for some of the technical details.
From Section \ref{sec:field_decomp} and the paragraph in \cite{Murphy:2015:CMS:13:4:825}
containing equations (2.39) and (2.40),
with $\chi_1$ replaced by $X_1$, we just need to prove that the functional
$F_{jk}(s)=\langle(sI-X_1\Gamma X_1)^{-1}X_1\hat{e}_j\bcdot\hat{e}_k\rangle$ 
has the integral representation displayed in
equation~\eqref{eq:Stieltjes_F_Discrete_polycrystal}. In the process, we will also establish a projection method for the numerically efficient, rigorous computation of
$\mu_{jk}$. This projection method is summarized by
equations~\eqref{Pi_coordinates_E}--\eqref{eq:Fs_W1} below.

In equations \eqref{eq:discrete_resolvent_energy} and \eqref{eq:Fkk} we gave a brief description 
about how Stieltjes integral representations for the $F_{kk}(s)$ arise. We now give a more thorough 
analysis and pay close attention to the projective nature of the matrices $X_1$ and $\Gamma$.
In \cite{Murphy:2024:arXiv:PolycrystalTheoryContinuum} we defined the real-symmetric 
mutually orthogonal projection matrices $X_i$, $i=1,2$, in terms of the
\emph{spatially dependent} rotation matrix $R$ and $C=\text{diag}(1,0,\ldots,0)$, 
all matrices of size $d\times d$. The paragraph in \cite{Murphy:2015:CMS:13:4:825} 
containing equations (2.28)--(2.30) describes how to bijectively map these $d\times d$
matrices to $N\times N$ matrices that are \emph{not} spatially dependent, where $N=L^d d$.
Under this mapping, $R$ becomes a banded rotation matrix satisfying
$R^T=R^{-1}$ and $C$ becomes $C=\text{diag}(I_1,0_1,\ldots,0_1)$, where
$I_1$ and $0_1$ are the identity and null matrices of size
$N_1=L^d$, and the vector $e_1$ is mapped to $\hat{e}_1=(1,1,\ldots,1,0,0,\ldots,0)$,
with $L^d$ ones in the first components and zeros in the rest of the components.

Writing $X_1\Gamma X_1=R^T(CR\,\Gamma \,R^TC)R$ we have 
\begin{align}\label{eq:Spec_Decomp_chi_Gamma_chi_Proof}
	X_1\Gamma X_1&=
	R^T
	\left[
	\begin{array}{ccc}
		\Gamma_1&0_{12}\\
		0_{21}&0_{22}   
	\end{array}
	\right]
	R
	=
	R^T
	\left[
	\begin{array}{ccc}
		W_1\Lambda_1W_1^{\;T}&0_{12}\\
		0_{21}&0_{22}
	\end{array}
	\right]
	R
	\notag\\
	&=
	R^T
	\left[
	\begin{array}{ccc}
		W_1&0_{12}\\
		0_{21}&I_2 
	\end{array}
	\right]    
	\left[
	\begin{array}{ccc}
		\Lambda_1&0_{12}\\
		0_{21}&0_{22}
	\end{array}
	\right]    
	\left[
	\begin{array}{ccc}
		W_1^{\;T}&0_{12}\\
		0_{21}&I_2
	\end{array}
	\right]    
	R,
\end{align}
where $I_2$ is the identity matrix of size $N_2\times N_2$, with
$N_2=N-N_1=L^d(d-1)$, and $0_{ab}$ denotes a matrix of zeros of size
$N_a\times N_b$, $a,b=1,2$. Moreover, $\Gamma_1$ is the $N_1\times N_1$ upper left principal
sub-matrix of $R\,\Gamma\,R^T$ and $\Gamma_1=W_1\Lambda_1W_1^{\;T}$ is its eigenvalue
decomposition. As $\Gamma_1$ 
is a real-symmetric matrix, $W_1$ is an orthogonal
matrix~\cite{Horn_Johnson-1990}. Also, since $R\,\Gamma\,R^T$ is a
similarity transformation of a projection matrix and $C$ is a
projection matrix, $\Lambda_1$ is a diagonal matrix with entries
$\lambda_i^1\in[0,1]$, $i=1,\ldots,N_1$, along its
diagonal~\cite{Horn_Johnson-1990,Demmel:1997}.     
Consequently, equation~\eqref{eq:Spec_Decomp_chi_Gamma_chi_Proof}
implies that the 
eigenvalue decomposition of the matrix $X_1\Gamma X_1$ is given by 
%
\begin{align}\label{eq:Spec_Decomp_chi_Gamma_chi}
	X_1\Gamma X_1=W\Lambda W^{\;T},
	\qquad
	W=R^T\left[
	\begin{array}{ccc}
		W_1&0_{12}\\
		0_{21}&I_2   
	\end{array}
	\right],
	\quad
	\Lambda=\left[
	\begin{array}{ccc}
		\Lambda_1&0_{12}\\
		0_{21}&0_{22}   
	\end{array}
	\right].
\end{align}
Here, $W$ is an orthogonal matrix satisfying $W^TW=WW^T=I$, $I$ is the
identity matrix on $\mathbb{R}^N$, and $\Lambda$ is a diagonal matrix with
entries $\lambda_i\in[0,1]$, $i=1,\ldots,N$, along its diagonal.

The eigenvalue decomposition of the matrix $X_1\Gamma X_1$ in
equation~\eqref{eq:Spec_Decomp_chi_Gamma_chi} demonstrates that its
resolvent $(sI-X_1\Gamma X_1)^{-1}$ is well defined for all
$s\in\mathbb{C}\backslash[0,1]$. In particular, by the orthogonality of the
matrix $W$, it has the following useful representation
$(sI-X_1\Gamma X_1)^{-1}=W(sI-\Lambda)^{-1}W^T$, where $(sI-\Lambda)^{-1}$ is a diagonal
matrix with entries $1/(s-\lambda_i)$ along its diagonal. This, in
turn, implies that the functional
$F_{jk}(s)=\langle(sI-X_1\Gamma X_1)^{-1}X_1\hat{e}_j\bcdot\hat{e}_k\rangle$ can be written as     
\begin{align}\label{eq:Matrix_Functional_proof}
	F_{jk}(s)
	=\langle(sI-\Lambda)^{-1}\;[X_1W]^T\hat{e}_j\,\bcdot\,W^T\hat{e}_k\rangle. 
\end{align}
Since $R^T=R^{-1}$ and $X_1=R^TCR$,
equation~\eqref{eq:Spec_Decomp_chi_Gamma_chi} implies that  
\begin{align}\label{eq:Projection_Eigenspace}
	X_1W=R^T\left[
	\begin{array}{ccc}
		W_1&0_{12}\\
		0_{21}&0_{22}  
	\end{array}
	\right]
	\quad
	\Longrightarrow
	\quad
	X_1\vecw_i=
	\begin{cases}
		\vecw_i, &\text{~for~} i=1,\ldots,N_1,  \\
		0,  &\text{~otherwise}.
	\end{cases}
\end{align}
This and the formula for $W$ in~\eqref{eq:Spec_Decomp_chi_Gamma_chi}
imply that 
\begin{align}\label{eq:Weights_chi}
	[X_1W]^T\hat{e}_j\,\bcdot\,W^T\hat{e}_k=[X_1W]^T\hat{e}_j\,\bcdot\,[X_1W]^T\hat{e}_k.
\end{align}

We are ready to provide the integral representation displayed
in~\eqref{eq:Stieltjes_F_Discrete_polycrystal} for the functional $F_{jk}(s)$ in 
equation~\eqref{eq:Matrix_Functional_proof}. Denote by
$Q_i=\vecw_i\,\vecw_i^{\;T}$, $i=1,\ldots,N$, the symmetric, mutually
orthogonal projection matrices, $Q_\ell\,Q_m=Q_\ell\,\delta_{\ell m}$, onto the eigen-spaces
spanned by the orthonormal eigenvectors
$\vecw_i$. Equation~\eqref{eq:Projection_Eigenspace} implies that 
$X_1Q_i=Q_iX_1=X_1Q_iX_1$, as $X_1Q_i=Q_i$ for $i=1,\ldots,N_1$, $X_1Q_i=0$
otherwise, and $X_1$ is a symmetric matrix. These properties allow us to write
the quadratic form $[X_1W]^T\hat{e}_j\,\bcdot\,[X_1W]^T\hat{e}_k$ as  
\begin{align}\label{eq:Quadratic_form}
	[X_1W]^T\hat{e}_j\,\bcdot\,[X_1W]^T\hat{e}_k =
	W^T\hat{e}_j\,\bcdot\,W^T\hat{e}_k =
	\sum_{i=1}^N
	(\vecw_i\bcdot\hat{e}_j)(\vecw_i\bcdot\hat{e}_k) =
	\sum_{i=1}^N Q_i\hat{e}_j\bcdot\hat{e}_k =
	\sum_{i=1}^N X_1Q_i\hat{e}_j\bcdot\hat{e}_k\,.
\end{align}
This and equations~\eqref{eq:Matrix_Functional_proof}
and~\eqref{eq:Weights_chi} yield 
\begin{align}\label{eq:Stieltjes_F_DiscretE}
	F_{jk}(s)=\int_0^1\frac{\d\mu_{jk}(\lambda)}{s-\lambda}\,, \quad
	\d\mu_{jk}(\lambda)=\sum_{i=1}^N\langle \delta_{\lambda_i}(\d\lambda)\;X_1\,Q_i\hat{e}_j\bcdot\hat{e}_k\rangle\,.
\end{align}
This 
concludes our proof of Theorem~\ref{thm:Discrete_Spectral_Theorem_ACM_polycrystal}

\begin{corollary}
	
	\label{cor:Discrete_Spectral_Theorem_ACM}
	For each $\omega\in\Omega$, let $M(\omega)=W(\omega)\Lambda(\omega)\,W(\omega)$ be 
	the eigenvalue decomposition of the real-symmetric matrix $M(\omega)=X_2(\omega)\,\Gamma_\bullet\,X_2(\omega)$. Here, the columns of the matrix $W(\omega)$
	consist of the orthonormal eigenvectors $\vecwt_i(\omega)$, $i=1,\ldots,N$,
	of $M(\omega)$ and the diagonal matrix $\Lambda(\omega)={\rm diag}(\lambda_1(\omega),\ldots,\lambda_N(\omega))$
	involves its eigenvalues $\lambda_i(\omega)$. 
	Denote $Q_i=\vecwt_i\,\vecwt_i^{\;T}$ the projection matrix onto the eigen-space
	spanned by $\vecwt_i$.
	The current field $\vecJ(\omega)$ satisfies $\vecJ(\omega)=\vecJ_0+\vecJ_f(\omega)$, with
	$\vecJ_0=\langle\vecJ(\omega)\rangle$ and $\Gamma_\bullet\vecJ(\omega)=\vecJ_f(\omega)$, 
	and the effective complex resistivity tensor $\brho^*$ has components
	$\rho^*_{jk}$,
	$j,k=1,\ldots,d$,  which satisfy       
	\begin{align}\label{eq:Stieltjes_F_Discrete_rho}
		&\rho^*_{jk}=\sigma_1^{-1}(\delta_{jk}-E_{jk}(s)), 
		&&E_{jk}(s)=\int_0^1\frac{\d\eta_{jk}(\lambda)}{s-\lambda}\,, 
		&&\d\eta_{jk}(\lambda)=\sum_{i=1}^N\langle \delta_{\lambda_i}(\d\lambda)\;X_2\,Q_i\hat{e}_j\bcdot\hat{e}_k\rangle\,.  
	\end{align}
\end{corollary}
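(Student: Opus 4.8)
The plan is to mirror the proof of Theorem~\ref{thm:Discrete_Spectral_Theorem_ACM_polycrystal} essentially verbatim, with the roles of $\Gamma=\Gamma_\times$, $X_1$, $C=\mathrm{diag}(I_1,0_1,\ldots,0_1)$, $s$, $F_{jk}$, and $\mu_{jk}$ replaced respectively by $\Gamma_\bullet=\Upsilon$, $X_2$, $I-C$, the appropriate contrast variable, $E_{jk}$, and $\eta_{jk}$. First I would invoke Corollary~\ref{cor:existence_J}, which guarantees the decomposition $\vecJ=\vecJ_0+\vecJ_f$ with $\vecJ_f=C^T\psi\in\Rc(C^T)$, $\langle\vecJ_f\rangle=0$, and $\Gamma_\bullet\vecJ_f=\vecJ_f$ (since $\Gamma_\bullet$ projects onto $\Rc(C^T)$), together with $\Gamma_\bullet\vecJ_0=0$. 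Then, exactly as in the derivation of the resolvent formula \eqref{eq:Current_Field} in Section~\ref{sec:resolvent_reps}, writing $\brho=\bsigma^{-1}$ in the two-phase form and applying $X_2$, one obtains $X_2\vecJ=s(sI-X_2\Upsilon X_2)^{-1}X_2\vecJ_0$, so that $E_{jk}(s)=\langle(sI-X_2\Upsilon X_2)^{-1}X_2\hat{e}_j\bcdot\hat{e}_k\rangle$ is the functional whose integral representation we must establish.

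Next I would carry out the block-diagonalization step paralleling \eqref{eq:Spec_Decomp_chi_Gamma_chi_Proof}--\eqref{eq:Spec_Decomp_chi_Gamma_chi}. Writing $X_2=R^T(I-C)R$ and $X_2\Upsilon X_2=R^T\bigl[(I-C)R\,\Upsilon\,R^T(I-C)\bigr]R$, the matrix $(I-C)R\,\Upsilon\,R^T(I-C)$ is block-diagonal with a nonzero $N_2\times N_2$ lower-right block $\Upsilon_2$ (the lower-right principal submatrix of $R\,\Upsilon\,R^T$) and zero elsewhere. Since $R\,\Upsilon\,R^T$ is a similarity transform of a projection and $I-C$ is a projection, $\Upsilon_2=W_2\Lambda_2W_2^{\;T}$ is real-symmetric with eigenvalues $\lambda_i\in[0,1]$, giving the eigenvalue decomposition $X_2\Upsilon X_2=W\Lambda W^T$ with $W=R^T\,\mathrm{diag}(I_1,W_2)$ orthogonal. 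This shows the resolvent $(sI-X_2\Upsilon X_2)^{-1}=W(sI-\Lambda)^{-1}W^T$ is well defined for $s\notin[0,1]$, and as in \eqref{eq:Matrix_Functional_proof} yields $E_{jk}(s)=\langle(sI-\Lambda)^{-1}[X_2W]^T\hat{e}_j\bcdot W^T\hat{e}_k\rangle$.

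The remaining step reproduces \eqref{eq:Projection_Eigenspace}--\eqref{eq:Quadratic_form}: from $X_2W=R^T\,\mathrm{diag}(0_1,W_2)$ one gets $X_2\vecw_i=\vecw_i$ for the $N_2$ eigenvectors associated with $\Upsilon_2$ and $X_2\vecw_i=0$ otherwise, hence $X_2Q_i=Q_iX_2=X_2Q_iX_2$ with $Q_i=\vecw_i\vecw_i^{\;T}$, which lets one rewrite the quadratic form as $\sum_{i=1}^N X_2Q_i\hat{e}_j\bcdot\hat{e}_k$. Inserting the spectral measure $\delta_{\lambda_i}(\d\lambda)$ then produces the claimed representation $E_{jk}(s)=\int_0^1 \d\eta_{jk}(\lambda)/(s-\lambda)$ with $\d\eta_{jk}(\lambda)=\sum_{i=1}^N\langle\delta_{\lambda_i}(\d\lambda)\,X_2\,Q_i\hat{e}_j\bcdot\hat{e}_k\rangle$, and the normalization $\rho^*_{jk}=\sigma_1^{-1}(\delta_{jk}-E_{jk}(s))$ follows from the energy-functional identity \eqref{eq:rho*} combined with $\brho=\bsigma^{-1}$, $\langle\vecJ_f\rangle=0$, and $\vecJ_0=\vecJ_0\hat{e}_k$. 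I expect the only genuine subtlety — as in the main theorem — to be the careful bookkeeping in the rank-deficient case: one must verify that the block structure and the identity $X_2Q_i=X_2Q_iX_2$ survive when $\nabla$ (equivalently $C$) is rank deficient, which is handled by using the convention $\Upsilon=\Gamma_\bullet=U_1^\bullet[U_1^\bullet]^T$ from Section~\ref{sec:field_decomp} rather than the full-rank formula \eqref{eq:GammaCurlDiscrete}; everything else is a routine transcription of the preceding proof.
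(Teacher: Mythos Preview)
Your proposal is correct and matches the paper's approach: the paper does not give a separate proof of the corollary, stating only that it ``follows immediately from the proof of Theorem~\ref{thm:Discrete_Spectral_Theorem_ACM_polycrystal} and the results of Section~\ref{sec:field_decomp},'' which is precisely the verbatim transcription (with $X_1\to X_2$, $\Gamma\to\Gamma_\bullet$, $C\to I-C$, and Corollary~\ref{cor:existence_J} in place of Theorem~\ref{thm:existence_E}) that you outline. Your identification of the only nontrivial point---that the block structure and the identity $X_2Q_i=X_2Q_iX_2$ must be checked using the SVD-based definition $\Gamma_\bullet=U_1^\bullet[U_1^\bullet]^T$ rather than \eqref{eq:GammaCurlDiscrete} in the rank-deficient case---is apt and consistent with how the paper handles $\Gamma_\times$.
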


\subsection{Projection method}\label{projection_method}

In this section we provide a formulation for a numerically efficient 
\emph{projection method} for computation of spectral measures and 
effective parameters for uniaxial polycrystalline media. Note that the 
sum in equation~\eqref{eq:Stieltjes_F_DiscretE} runs only over the index 
set $i=1,\ldots,N_1$, as equation~\eqref{eq:Projection_Eigenspace} implies
that the masses $X_1\,Q_i\hat{e}_j\bcdot\hat{e}_k$ of the measure $\mu_{jk}$
are zero for $i=N_1+1,\ldots,N$.  Denote by $\lambda_i^1$ and $\vecw_i^{\,1}$,
$i=1,\ldots,N_1$, the eigenvalues and eigenvectors of the $N_1\times N_1$
matrix $\Gamma_1=W_1\Lambda_1W_1^{\;T} $, defined in
equation~\eqref{eq:Spec_Decomp_chi_Gamma_chi_Proof}. Now, write        
\begin{align}\label{Pi_coordinates_E} 
	R\hat{e}_j=
	\left[
	\begin{array}{ccc}
		\hat{e}_j^{\,r_1}\\
		\hat{e}_j^{\,r_2}
	\end{array}
	\right],
\end{align}
where $\hat{e}_j^{\,r_1}\in\mathbb{R}^{N_1}$ and
$\hat{e}_j^{\,r_2}\in\mathbb{R}^{N_2}$. Therefore, writing the
matrix $X_1W$ in equation~\eqref{eq:Projection_Eigenspace} in block
diagonal form, $X_1W=R^T\text{diag}(W_1,0_{22})$, we have that
\begin{align}\label{eq:Reduced_Weights}
	[X_1W]^T\hat{e}_j\,\bcdot\,[X_1W]^T\hat{e}_k=[\text{diag}(W_1^T,0_{22})R\hat{e}_j]
	\bcdot[\text{diag}(W_1^T,0_{22})R\hat{e}_k]
	=[W_1^T\hat{e}_j^{\,r_1}]
	\bcdot[W_1^T\hat{e}_k^{\,r_1}]. 
\end{align}
Denote by $Q^1_i=\vecw^{\,1}_i[\vecw^{\,1}_i]^{\;T}$, $i=1,\ldots,N_1$,
the mutually orthogonal projection matrices,
$Q^1_\ell\,Q^1_m=Q^1_\ell\,\delta_{\ell m}$, onto the eigen-spaces spanned by the
orthonormal eigenvectors
$\vecw^{\,1}_i$. Equations~\eqref{eq:Matrix_Functional_proof},~\eqref{eq:Weights_chi},   
and~\eqref{eq:Reduced_Weights} then yield    
\begin{align}\label{eq:Fs_W1}
	F_{jk}(s)=\langle(sI_1-\Lambda_1)^{-1}[W_1^T\hat{e}_j^{\,r_1}]
	\bcdot[W_1^T\hat{e}_k^{\,r_1}]\rangle
	=\left\langle\sum_{i=1}^{N_1} 
	\frac{Q^1_i\hat{e}_j^{\,r_1}\bcdot\hat{e}_k^{\,r_1}                
	}{s-\lambda_i^1}
	\right\rangle.        
\end{align}

Equation~\eqref{eq:Fs_W1} demonstrates that only the spectral
information of the matrices $W_1$ and $\Lambda_1$ contribute to the functional
representation for $F_{jk}(s)$ in~\eqref{eq:Matrix_Functional_proof}
and its integral representation
in~\eqref{eq:Stieltjes_F_Discrete_polycrystal}. From a computational standpoint,  
this means that only the eigenvalues and eigenvectors of the $N_1\times N_1$
matrix $\Gamma_1$ need to be computed in order to compute the spectral
measures underlying the integral representations of the effective
parameters for finite lattice systems. This is extremely cost
effective as the numerical cost of finding all
the eigenvalues and eigenvectors of a real-symmetric $N\times N$ matrix is
$O(N^3)$~\cite{Demmel:1997}, so $N_1=N/d$ implies the computational cost
of the projection method is reduced by a factor of $d^3$.

\section{Numerical Results}\label{sec:Numerical_Results}
In this section, we utilize Theorem \ref{thm:Discrete_Spectral_Theorem_ACM_polycrystal} and the projection method of Section \ref{projection_method} to compute discrete, Stieltjes integral representations for the effective conductivity of two- and three-dimensional uniaxial polycrystalline media. We consider two-dimensional square polycrystalline media composed of a grid of randomly oriented square crystals as well as three-dimensional cubic media composed of stacked grids of randomly oriented cubic crystals. For simplicity, we will focus on the first diagonal component of the effective conductivity tensor and the underlying spectral measure $\mu_{11}$. The values of the conductivities for each crystallite are taken to be $\sigma_1 = 51.0741 + i45.1602$ in the $x$-direction and $\sigma_2 = 3.07 + i0.0019$ in both $y$- and $z$-directions, so that $s \approx -0.034 + i0.032$.

We now discuss our numerical method for computing spectral measures and effective transport coefficients in this matrix setting. Consider a uniaxial polycrystalline medium consisting of randomly oriented crystallites indexed by $\omega \in \Omega$ and described by the matrix $X_1(\omega) = R^T(\omega) C R(\omega)$. In particular, for fixed $\omega \in \Omega$, the measure $\mu_{kk}(\omega)$ is a weighted sum of $\delta$-measures centered at the eigenvalues $\lambda_i(\omega)$ of $M(\omega)$, $i = 1,...,N$, with weights $[X_1(\omega)Q_i(\omega) \hat e_k] \cdot \hat e_k$ involving the eigenvectors $\vecw_i(\omega)$ of $M(\omega)$ via $Q_i = \vecw_i \vecw_i^T$. However, equation \ref{eq:Projection_Eigenspace} implies that the measure weights $[X_1(\omega)Q_i(\omega) \hat e_k] \cdot \hat e_k$ are identically zero for $i = 1,..., N_0(\omega)$. This was used in equation \eqref{eq:Fs_W1} to show that the measure $\mu_{kk}(\omega)$ depends only on the eigenvalues $\lambda_i^1(\omega)$, $i = 1,...,N_1(\omega)$, and the eigenvectors $\vecw_i(\omega)$ of the principal sub-matrix $\Gamma_1(\omega)$ of $M(\omega)$, and that the measure weights can be expressed more explicitly as $Q_i^1(\omega) \hat e_k^{\pi_1} \cdot \hat e_k^{\pi_1}$ with $Q_i^1 = \vecw_i^1 [\vecw_i^1]^T$. Consequently, for fixed $s \in \mathbb{C}\backslash [0,1]$, the value of the effective complex conductivity $\sigma_{kk}^* = \sigma_2(1-F_{kk}(s))$ of the medium can be obtained by computing $\lambda_i^1(\omega)$ and $\vecw_i^1(\omega)$ for all $i = 1, ..., N_1(\omega)$ for each $\omega \in \Omega$. Since the computational cost of finding all the eigenvalues and eigenvectors of a $N \times N$ real-symmetric matrix is $O(N^3)$, this ``projection method" makes the numerical computation of $\mu_{kk}$ and $\sigma^*_{kk}$ more efficient by a factor of $1/d$. 

For the randomly oriented polycrystalline media described above, the cardinality $|\Omega|$ of the sample space $\Omega$ of geometric configurations is infinite due to the continuous distribution that each crystallite orientation is sampled from. In our numerical computations, every crystallite orientation $\theta_j$ is sampled from $U([0, 2\pi])$ measured from the $x$-axis (in 2D) and from the $z$-axis (in 3D). For a given arrangement of crystallites, the orientation angles $\theta_j$ for each crystallite constitutes a single geometric configuration $\omega \in \Omega$. For each $\omega \in \Omega$, \textit{all} of the eigenvalues and eigenvectors of the matrix $\Gamma_1(\omega)$ were computed using the MATLAB function \textit{eig()} despite the streamlined projection method described previously. This is necessary in order to calculate the corresponding current density $\vecJ$ which is detailed below.

In order to visually determine the behavior of the function $\mu_{11}(\lambda) = \langle Q(\lambda) \hat e_1, \hat e_1 \rangle_1$ underlying the spectral measure $\mu_{11}$ for a given random polycrystalline geometry, we plot a histogram representation of $\mu_{11}(\lambda)$, called the \textit{spectral function}, which we will also denote by $\mu_{11}(\lambda)$. In order to compute the spectral function $\mu_{11}(\lambda)$, the spectral interval $[0,1]$ was divided into $K$ sub-intervals $I_k$, $k = 1,...,K$, of equal length $1/K$. Next, for fixed $k$, we sum the discrete spectral measure weights $m_i = X_1 Q_i \hat e_1 \cdot \hat e_1$ where $i$ are the indices that correspond to each $\lambda_i$ that is in the interval $I_k$. Averaging this sum over the total number of geometric configurations produces the value displayed in the spectral function over the interval $I_k$. 

\begin{figure}[]              
\centering
\includegraphics[width=\textwidth]{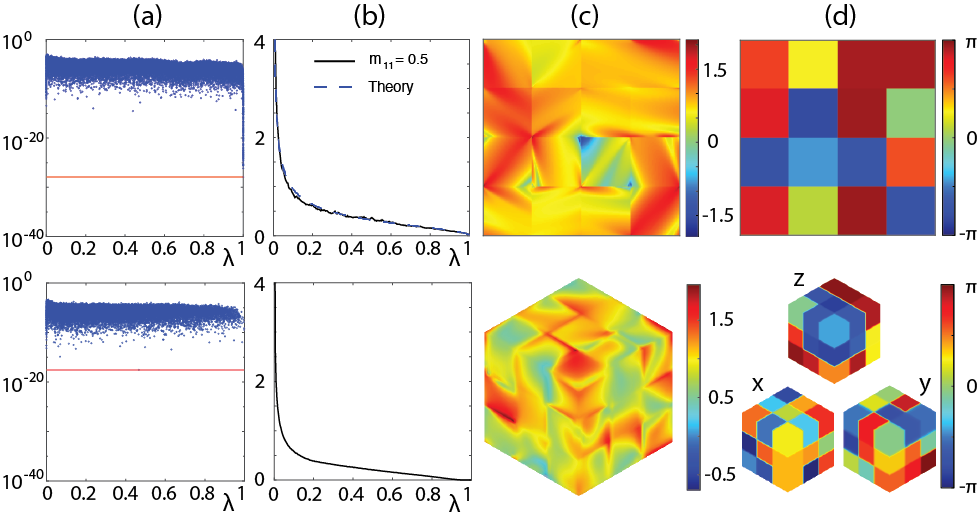}
\caption{Spectral measures and current density fields for 2D (top row) and 3D (bottom row) uniaxial polycrystalline media. (a) Spectral measure weights $m_i$ versus associated eigenvalue $\lambda_i$ of the matrix $M$ with corresponding current density field $|\vecJ|$ in log10 scale shown in (c). The red horizontal lines in (a) correspond to the smallest weight $m_i$ across the entire spectrum. The orientation of each crystallite (d) is taken to be uniformly distributed on the interval $[-\pi, \pi]$ from the $x$-axis (2D) and from the $x$, $y$, and $z$-axes (3D). The crystallite orientations in 3D are achieved through composed rotations along the $x$, $y$, and $z$ axes in a random succession.  (b) Corresponding spectral functions $\mu(\lambda)$ (histogram representations) of the spectral measure $\mu$ in (a), averaged over 2000 distinct geometric configurations. Each configuration is distinguished solely by the isotropic orientation angles $\theta_j \sim U[-\pi,\pi]$. In 2D, the geometric isotropy of the orientation angles allows the spectral function to align with the self-duality theoretical result $\mu_{kk}(\lambda)=(1/\pi)\sqrt{(1-\lambda)/\lambda}$. The electrical conductivity is taken to be $\sigma_1 = 51.074+ i45.160$ in the $x$ direction and $\sigma_2 = 3.070+i0.0019$ in the $y$ and $z$ directions. $E_0$ is taken to be oriented along the $y$-axis.}
\label{fig:figure1}
\end{figure}

In \figref{fig:figure1}(a), we display the spectral measure weights $m_i$ versus the associated eigenvalue $\lambda_i$ for a single geometric configuration of a square grid (top row) consisting of $16$ $(4 \times 4$) square crystallites of length $L_c=50$ and for a cubic grid (bottom row) consisting of $27$ ($3\times 3 \times 3$) cubic crystallites of length $L_c=10$. Histogram representations of the measure weights (the spectral function) is displayed in \figref{fig:figure1}(b), where the top row represents a $3\times 3$ grid of square crystallites with $L_c=25$ and the bottom row represents a $4 \times 4 \times 4$ grid of cubic crystallites of length $L_c=4$. The measure weights are averaged over $2000$ geometric configurations, resulting in smooth spectral functions. We reduce the length $L_c$ for these histogram representations with respect to \figref{fig:figure1}(a) due to the large number of geometric configurations and requisite computations required. 

\figref{fig:figure1}(c) displays the current density $\vecJ = \bsig \vecE = \sigma_1 X_1 \vecE + \sigma_2 X_2 \vecE$ with $\vecE_0$ taken to be $\hat e_2$ (oriented along the $y$ axis). The components $X_1 \vecE$ and $X_2 \vecE$ are computing using Equation \eqref{eq:discrete_resolvent}. The geometric configuration and orientation angles $\theta_j$ are identical to the configuration and angles taken in Figure 1(a). The orientation angles $\theta \in [-\pi,\pi]$ of each crystallite are given in \figref{fig:figure1}(d).


\section{Conclusions}
We formulated a rigorous mathematical framework that provides Stieltjes integral representations for the bulk transport coefficients of discretized uniaxial polycrystalline materials in a setting suitable for direct numerical calculation. These computations involve the spectral measures of real, symmetric random matrices of the form $X_j \Gamma X_j$ and $X_j \Upsilon X_j$, $j = 1,2$, where $\Gamma$ and $\Upsilon$ are projections onto the range of the gradient and curl operators and are constructed from finite difference matrices. Meanwhile, $X_1$ is a random, real, symmetric projection matrix which encodes crystallite arrangement and orientation. 

We show that the discrete mathematical framework describing effective transport for uniaxial polycrystalline materials closely parallels the continuum theory and showed the existence of transport fields $\vecE$ and $\vecJ$ in discrete media follows from the fundamental theorem of linear algebra and the known orthogonality properties of the domains, ranges, and kernels of finite difference representations of the curl, gradient, and divergence operators. Resolvent representations of the transport fields enable explicit computation of $\vecJ$ in arbitrary polycrystalline geometries through eigenvalue decomposition of the matrices $X_1 \Gamma X_1$ and $X_2 \Gamma X_2$. 

We developed an efficient projection method that demonstrates the spectral measures can be computed using matrices by a factor of $d$ in each dimension, leading to considerable computational savings. We validated our framework through numerical calculations of spectral measures and the effective conductivity for both 2D and 3D uniaxial polycrystalline media with checkerboard microgeometry. Our results for the spectral function in the 2D case for isotropic crystallite orientations align closely with known self-duality results.

\medskip

{\bf Acknowledgements.}
We gratefully acknowledge support from the 
Division of Mathematical Sciences at the US National Science 
Foundation (NSF) through Grants DMS-0940249, DMS-1413454,
DMS-1715680, DMS-2111117, DMS-2136198, and DMS-2206171.
We are also grateful for support from the Applied and 
Computational Analysis Program 
and the Arctic and Global Prediction Program 
at the US Office of Naval 
Research through grants 
N00014-13-1-0291,
N00014-18-1-2552,
N00014-18-1-2041
and
N00014-21-1-2909.

\medskip

\bibliographystyle{plain}
\bibliography{murphy}
\end{document}